\documentclass[12pt]{article}

\usepackage{amsmath,amssymb,amsthm}
\usepackage{geometry}
\usepackage{booktabs}
\usepackage{array}
\usepackage{xcolor}
\usepackage{hyperref}
\usepackage{enumitem}
\usepackage{algorithm}
\usepackage{algpseudocode}
\usepackage{graphicx}

\hypersetup{
  colorlinks = true,
  linkcolor  = blue!70!black,
  urlcolor   = blue!70!black,
  citecolor  = blue!70!black
}

\theoremstyle{plain}
\newtheorem{proposition}{Proposition}[section]

\theoremstyle{definition}
\newtheorem{defn}{Definition}[section]
\newtheorem{remark}{Remark}[section]

\algrenewcommand\algorithmicrequire{\textbf{Input:}}
\algrenewcommand\algorithmicensure{\textbf{Output:}}
\floatname{algorithm}{Algorithm}

\title{Apple-Peel Unfolding in Three and Four Dimensions:\\[0.3em]
       Spiral and Zonal Selection Rules}

\author{
  T.~Yoshino\thanks{Corresponding author.
    Email: \texttt{tyoshino@toyo.jp}}\\
  \small Department of Mechanical Engineering, Toyo University,\\
  \small 2100 Kujirai, Kawagoe, 350-8585, Japan
  \and
  S.~Chaidee\\
  \small Department of Mathematics, Faculty of Science,
         Chiang Mai University,\\
  \small 239 Huay Kaew Road, Muang District,
         Chiang Mai, 50200, Thailand\\
  \small Advanced Research Center for Computational Simulation,
         Chiang Mai University,\\
  \small 239 Huay Kaew Road, Muang District,
         Chiang Mai, 50200, Thailand
}

\date{2026/05/19}

\begin{document}
\maketitle

\noindent\textbf{MSC 2020:} 52B05, 52B11, 68U05\\
\noindent\textbf{Keywords:} unfolding, net, polyhedron, 4-polytope,
spiral unfolding, signed determinant, greedy algorithm

\bigskip

\begin{abstract}
Apple-Peel Unfolding is a greedy algorithm that selects the faces (or cells)
of a polyhedron (or polytope) one at a time in a spiral order,
producing a net analogous to peeling an apple in a single continuous strip.
We define two face-selection rules---RS (Spiral rule: minimum signed determinant,
i.e.\ sharpest clockwise turn) and
RZ (Zonal rule: maximum coordinate along the peeling axis)---and systematically
evaluate their unfolding success rates on
(i)~the five Platonic solids,
(ii)~the thirteen Archimedean solids, and
(iii)~the six regular convex 4-polytopes.
A principal contribution is a three-way classification of each solid as
\emph{Perfect} (every starting pair yields a complete net),
\emph{Possible} (at least one pair succeeds), or
\emph{Impossible} (no pair succeeds), together with an equivariance
argument showing that face-transitive solids are confined to the
$0/100\%$ dichotomy.
RZ achieves the highest success rates in most cases;
for the regular 4-polytopes it is the only rule yielding non-zero
results for the 120-cell, where it achieves a Perfect result
(1,440/1,440 pairs).
We note that \emph{ordering success} (completing the greedy traversal)
and \emph{geometric validity} (no self-intersection in the 3D realization)
are distinct: every 120-cell ordering produces a self-intersecting 3D net,
so the 120-cell has zero valid 3D nets despite its Perfect ordering result.
The 600-cell is Impossible under all rules tested.
\end{abstract}

\tableofcontents

\section{Introduction}

The problem of constructing a non-overlapping net (unfolding) of a convex
polyhedron by cutting along edges dates back to Albrecht D\"{u}rer (1525)
and remains open in general~\cite{Shephard1975,Demaine2007,Pak2010};
in the non-convex setting, Bern et al.~\cite{Bern2003} exhibited
polyhedra with convex faces that admit no edge-unfolding into a
simple planar net at all.
Algorithmic studies of unfolding heuristics on convex polyhedra go back
at least to Schlickenrieder~\cite{Schlickenrieder1997}, who introduced
and compared steepest-edge, shortest-path, and spiral strategies.
O'Rourke~\cite{ORourke2015} proved that all Platonic and Archimedean solids
admit non-overlapping \emph{spiral} unfoldings defined by a continuous band,
and Lubiw et al.~\cite{Lubiw2010} showed that all such solids admit
\emph{Hamiltonian} (Zipper) unfoldings.
Aronov and O'Rourke~\cite{AronovORourke1992} established that the star
unfolding of any convex polytope produces a non-overlapping net.
While Kaino~\cite{Kaino2019} has explored apple-peel-type foldouts of
4-polytopes---including layer-based studies of the 120-cell and 600-cell,
the latter left unresolved by both methods---and
Akitaya et al.~\cite{Akitaya2024} have studied
path-unfolding of the tesseract (8-cell), those works do not provide a
unified greedy algorithm applicable to all six regular 4-polytopes and
all starting pairs; to our knowledge, no such systematic algorithmic
formulation has been given prior to the present work.

In this paper we introduce \textbf{Apple-Peel Unfolding}, a greedy algorithm
that selects the faces of a polyhedron one at a time in a spiral order,
inspired by the single continuous strip produced when peeling an apple.
When the resulting net is viewed from the outside of the polyhedron,
the peel path winds \emph{clockwise}---the orientation of a right-handed
person peeling in the leftward direction.
Unlike continuous-band or edge-cut approaches, the algorithm is governed
by an explicit \emph{face-selection rule}, and the choice of rule is central
to its performance.
A preliminary version of the 3D algorithm appeared in a companion
preprint~\cite{Yoshino2026arXiv}; the present paper provides a
self-contained treatment of the algorithm, a systematic two-rule
comparison on all Platonic and Archimedean solids, and the first
extension to four-dimensional polytopes. 
Specifically, we define two selection rules (RS, RZ) via the signed
three-dimensional determinant $\det(\mathbf{c}_1,\mathbf{c}_k,\mathbf{c}_j)$
(Section~\ref{sec:algorithm}), apply them to all Platonic and Archimedean
solids (Section~\ref{sec:3d}), extend the algorithm to four dimensions via
a global $\mathbf{c}_1$--$\mathbf{c}_2$ determinant condition
(Section~\ref{sec:4d}; cross-dimensional summary in
Table~\ref{tab:summary}), and illustrate the algorithm's behavior on
representative regular 4-polytopes (Section~\ref{sec:examples}).

\section{Algorithm}
\label{sec:algorithm}

\subsection{Geometric Setup}

Given a polyhedron with vertex coordinates and face list, the algorithm
proceeds as follows.
\begin{enumerate}[nosep]
  \item Translate the vertex set so that its centroid is at the origin.
  \item Choose a starting face $F_1$.
        Rotate so that the centroid $\mathbf{c}_1$ of $F_1$ aligns with
        the $+z$ axis (\emph{Face-up} orientation).
        In later experiments (Section~\ref{sec:3d}),
        we evaluate every face as $F_1$; because the Face-up rotation
        differs for each choice, this is equivalent to running the
        algorithm with the solid tilted into each of its face-up orientations.
  \item Fix a second face $F_2$ adjacent to $F_1$.
  \item Greedily select subsequent faces one at a time until all faces
        are visited or no unvisited adjacent face remains.
\end{enumerate}
An unfolding is \textbf{successful} if every face is visited exactly once.
We evaluate all ordered pairs $(F_1, F_2)$ with $F_2$ adjacent to $F_1$.
Figure~\ref{fig:axes} shows the resulting coordinate system and face labeling.

\begin{figure}[htbp]
\centering
\includegraphics[width=0.5\textwidth]{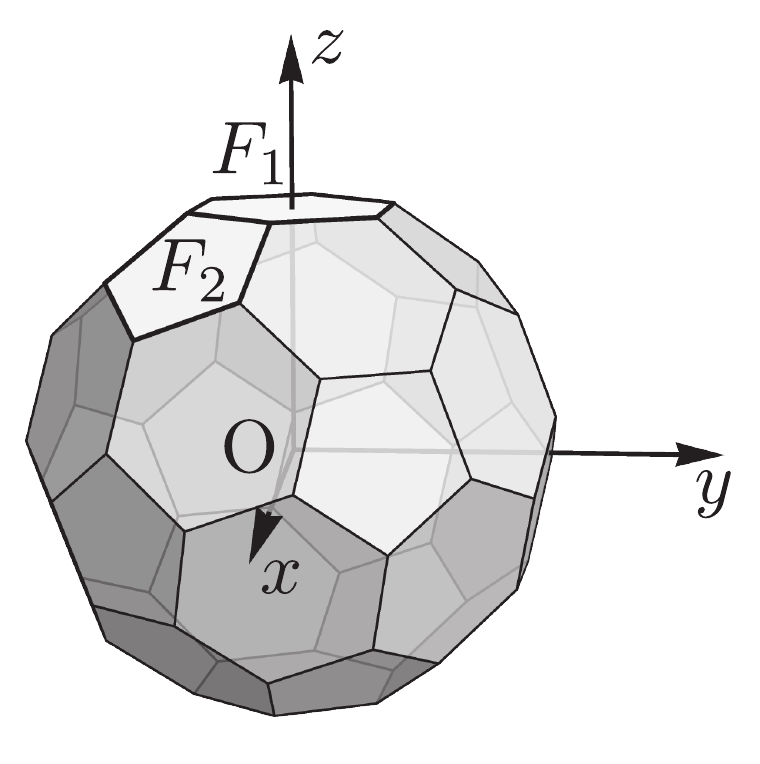}
\caption{Coordinate axes and face labeling used in the Apple-Peel algorithm.
  The polyhedron is rotated so that the starting face $F_1$ lies at the
  ``north pole'' ($+z$ axis).}
\label{fig:axes}
\end{figure}

Let $\mathbf{c}_1$ be the centroid of the starting face $F_1$ (fixed throughout),
$\mathbf{c}_k$ the centroid of the current face $F_k$, and
$\mathbf{c}_j$ the centroid of a candidate adjacent face $F_j$.
We call $F_j$ a \textbf{right candidate} if
\begin{equation}
  \det(\mathbf{c}_1,\,\mathbf{c}_k,\,\mathbf{c}_j) \;\leq\; \varepsilon,
  \qquad \varepsilon = 10^{-10}.
  \label{eq:left}
\end{equation}
The right half-space consists of faces whose centroid lies clockwise
of the current direction when viewed from outside the polyhedron;
selecting from it enforces the CW spiral at every step
(Figure~\ref{fig:leftcond}).
The reference point $\mathbf{c}_1$ is the starting-face centroid and
remains \emph{fixed} at every step of the peeling sequence.
This \textbf{global} reference guarantees that the spiral direction
around the $z$-axis is consistent throughout; using the previous face
centroid $\mathbf{c}_{k-1}$ as a \emph{local} reference allows the
spiral direction to drift between steps and breaks the equivariance of
the algorithm under the symmetry group (see Proposition~\ref{prop:equivariance}).
The threshold $\varepsilon=10^{-10}$ absorbs floating-point noise in the
determinant, which arises for solids with irrational coordinates
(e.g., the Dodecahedron, whose vertices involve the golden ratio)
and can cause $\det\approx 0$ values to flip sign across symmetry-equivalent pairs.

\begin{figure}[htbp]
\centering
\includegraphics[width=0.55\textwidth]{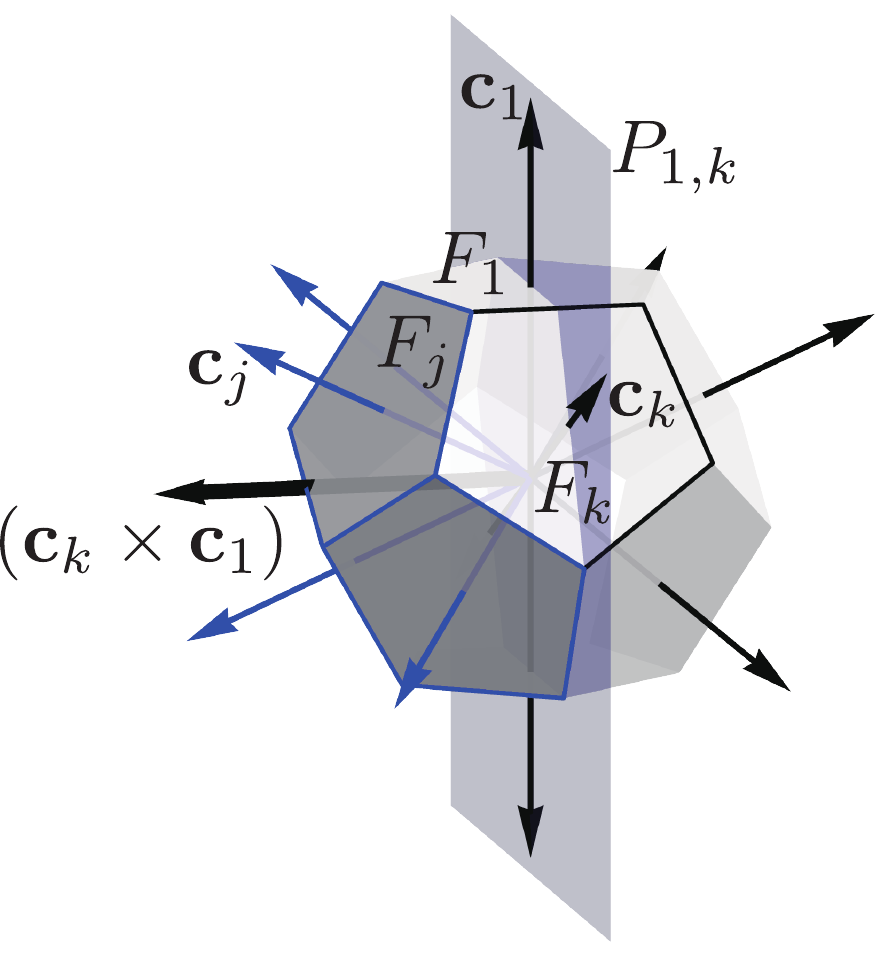}
\caption{Illustration of the right half-space condition~\eqref{eq:left}.
  $P_{1,k}$ denotes the plane through the origin spanned by
  $\mathbf{c}_1$ and $\mathbf{c}_k$, with normal $\mathbf{c}_k\times\mathbf{c}_1$.
  Blue arrows indicate the outward normal vectors of the candidate faces;
  dark gray faces are unvisited faces not yet in the peeling sequence.
  Candidate face $F_j$ is a right candidate if its centroid
  $\mathbf{c}_j$ lies in the half-space clockwise of $\mathbf{c}_k$
  when viewed from outside, i.e.\ on the $(\mathbf{c}_k\times\mathbf{c}_1)$ side of $P_{1,k}$.}
\label{fig:leftcond}
\end{figure}

The Darboux frame constructs a local coordinate system on the sphere
surface at the current face centroid.
It is used \emph{only} in the fallback branch of one of the two
selection rules (when no right candidate exists);
the rules themselves are introduced in Section~\ref{sec:rules}.

\begin{defn}[Darboux frame on the sphere]
\label{def:darboux}
Given consecutive face centroids $\mathbf{c}_{k-1}$ and $\mathbf{c}_k$,
define:
\begin{align}
  \hat{\mathbf{n}} &= \frac{\mathbf{c}_k}{\|\mathbf{c}_k\|},
  \label{eq:nhat}\\
 \hat{\mathbf{f}} &= \frac{
               (\mathbf{c}_k - \mathbf{c}_{k-1})
               - [(\mathbf{c}_k - \mathbf{c}_{k-1})\cdot\hat{\mathbf{n}}]\,\hat{\mathbf{n}}}{\left\|
               (\mathbf{c}_k - \mathbf{c}_{k-1})
               - [(\mathbf{c}_k - \mathbf{c}_{k-1})\cdot\hat{\mathbf{n}}]\,\hat{\mathbf{n}}
             \right\|},
  \label{eq:fhat}\\
  \hat{\mathbf{l}} &= \hat{\mathbf{n}} \times \hat{\mathbf{f}}.
  \label{eq:lhat}
\end{align}
Here $\hat{\mathbf{n}}$ is the sphere normal, $\hat{\mathbf{f}}$ is the forward direction
(projection of the step vector onto the tangent plane at $\mathbf{c}_k$),
and $\hat{\mathbf{l}}$ is the left direction.
\end{defn}

The \emph{signed angle} from the forward direction to a candidate $F_j$ is
\begin{equation}
  \varphi_j = \operatorname{atan2}\!\bigl(
    \tilde{\mathbf{c}}_j \cdot \hat{\mathbf{l}},\;
    \tilde{\mathbf{c}}_j \cdot \hat{\mathbf{f}}
  \bigr),
  \quad
  \tilde{\mathbf{c}}_j
    = \frac{\mathbf{c}_j - (\mathbf{c}_j\cdot\hat{\mathbf{n}})\hat{\mathbf{n}}}
           {\|\mathbf{c}_j - (\mathbf{c}_j\cdot\hat{\mathbf{n}})\hat{\mathbf{n}}\|}.
  \label{eq:phi}
\end{equation}
Here $\operatorname{atan2}(y,x)$ denotes the four-quadrant inverse tangent,
returning values in $(-\pi,\pi]$, so that $\varphi_j>0$ indicates a left turn
and $\varphi_j<0$ a right turn.

\subsection{Two Selection Rules}
\label{sec:rules}

Let $R$ denote the set of right candidates at the current step
and $P$ the full set of unvisited neighbors.
Note that $\det(\mathbf{c}_1,\mathbf{c}_k,\mathbf{c}_j)
= (\mathbf{c}_1\times\mathbf{c}_k)\cdot\mathbf{c}_j$,
so the right-candidate filter \eqref{eq:left} and the selection criterion below
both measure the same quantity: the signed volume spanned by
$\mathbf{c}_1$, $\mathbf{c}_k$, and the candidate.
Using the same global quantity for both filter and selection ensures
that the spiral direction is consistent across all steps.

Before formalising the rules, we comment on the design choice.
Many \emph{a priori} plausible greedy criteria exist---selecting the
nearest unvisited neighbor, minimizing the angular deviation from the
previous edge direction, or following the smallest dihedral angle
across the shared edge---but each of these depends on \emph{local}
quantities (the previous face or the current edge) and therefore
breaks equivariance under the polyhedron's symmetry group: a symmetry
$\sigma\in G$ rotates the local reference frame together with the
face labels, but the resulting frame interacts differently with the
unvisited neighborhood at each step, so the selection at $(\sigma F_1,
\sigma F_2)$ need not coincide with the $\sigma$-image of the
selection at $(F_1,F_2)$ (cf.\ Proposition~\ref{prop:equivariance}).
Rules that use only the \emph{global} peeling axis $+z$ and the
global reference $\mathbf{c}_1$, in contrast, automatically preserve
equivariance.
Within this equivariant family two natural extremes stand out:
maximizing the azimuthal turn per step (a tight spiral) and
maximizing the axial conservation per step (a zonal sweep).
We study these two as representative endpoints; intermediate
strategies (for instance, a convex combination of the two scores)
are easily formulated, but the two-rule comparison already exposes
the qualitative phenomena---perfect coverage, partial coverage, and
structural impossibility---reported in the remainder of the paper.

We define two rules, named the \textbf{Spiral rule} (RS) and the
\textbf{Zonal rule} (RZ).

\begin{description}
  \item[\textbf{RS} (Spiral rule, min\,Det)]
    Select
    $\displaystyle\arg\min_{j\in R}\,\det(\mathbf{c}_1,\mathbf{c}_k,\mathbf{c}_j)$
    (most negative determinant $=$ sharpest clockwise spiral).
    Fallback when $R=\emptyset$:
    $\arg\min_{j\in P}\,\varphi_j$ via the Darboux frame
    (Definition~\ref{def:darboux});
    ties broken by $\arg\min_{j\in P}\,|\det(\mathbf{c}_1,\mathbf{c}_k,\mathbf{c}_j)|$.

  \item[\textbf{RZ} (Zonal rule, max\,$z$)]
    Select $\displaystyle\arg\max_{j\in R}\,(\mathbf{c}_j)_z$
    (highest centroid along the peeling axis);
    $z$-ties within $10^{-10}$ are broken by
    $\arg\min_{j\in R}\,\det(\mathbf{c}_1,\mathbf{c}_k,\mathbf{c}_j)$.
    Fallback when $R=\emptyset$:
    $\arg\min_{j\in P}\,(\mathbf{c}_j)_z$;
    $z$-ties broken by
    $\arg\min_{j\in P}\,\det(\mathbf{c}_1,\mathbf{c}_k,\mathbf{c}_j)$.
\end{description}

\begin{remark}[Rationale for rule names and criteria]
RS produces a tight clockwise coil around the $z$-axis by always taking the
most negative determinant (sharpest clockwise turn);
the path resembles a helix (spiral) descending from north pole to south pole.
RZ sweeps each latitude band in turn before descending to the next,
producing a zonal scan analogous to geographic latitude zones.

Using $\det(\mathbf{c}_1,\mathbf{c}_k,\mathbf{c}_j)$ as the primary RS
criterion---rather than the Darboux angle $\varphi_j$---aligns the selection
with the right-candidate filter (both use the same global reference $\mathbf{c}_1$),
and ensures equivariance under the symmetry group:
$\det(A\mathbf{c}_1,A\mathbf{c}_k,A\mathbf{c}_j)=\det(A)\,\det(\mathbf{c}_1,\mathbf{c}_k,\mathbf{c}_j)=\det(\mathbf{c}_1,\mathbf{c}_k,\mathbf{c}_j)$
for any $A\in\mathrm{SO}(3)$.
The Darboux frame $\varphi_j$ depends on the local step direction
$\mathbf{c}_{k-1}\to\mathbf{c}_k$ and uses a \emph{local} reference that
changes at every step; its use is retained only in the RS fallback branch
as a geometric tiebreaker when no right candidate exists.
For RZ, the $z$-tiebreak was previously also based on $\varphi_j$; it is
here replaced by $\det$, which is equivalent in practice and
formally consistent.
Algorithm~\ref{alg:peel} gives the complete procedure.
\end{remark}

\begin{algorithm}[tp]
\caption{Apple-Peel Unfolding (3D, rule $r \in \{\text{RS, RZ}\}$,
         variant $v \in \{\text{w},\text{n}\}$).
         variant w = with fallback; variant n = no fallback.}
\label{alg:peel}
\begin{algorithmic}[1]
\Require vertex coordinates, faces, adjacency lists,
         starting pair $(F_1,F_2)$, rule $r$, variant $v$
\Ensure face selection order, success flag
\State $\mathit{order} \gets [F_1,\,F_2]$;\;
       $\mathit{prev} \gets F_1$;\; $\mathit{last} \gets F_2$
\While{$\mathit{last}$ has unvisited adjacent faces}
  \State let $P$ = unvisited neighbors of $\mathit{last}$;\;
         $R \gets \{\,j\in P : \det(\mathbf{c}_1,\mathbf{c}_{\mathit{last}},\mathbf{c}_j)\le \varepsilon\,\}$
         \hfill\eqref{eq:left}
  \If{$R \neq \emptyset$}
    \If{$r = \text{RS}$}
      \State $\mathit{next} \gets \arg\min_{j\in R}\,\det(\mathbf{c}_1,\mathbf{c}_{\mathit{last}},\mathbf{c}_j)$
    \ElsIf{$r = \text{RZ}$}
      \State $\mathit{next} \gets \arg\max_{j\in R}\,(\mathbf{c}_j)_z$;\;
             $z$-ties broken by $\arg\min_j\,\det(\mathbf{c}_1,\mathbf{c}_{\mathit{last}},\mathbf{c}_j)$
    \EndIf
  \ElsIf{$v = \text{n}$} \Comment{no-fallback variant}
    \State \Return $(\mathit{order},\,\text{failure})$
  \Else \Comment{with-fallback variant: $R=\emptyset$, $v=\text{w}$}
    \If{$r = \text{RS}$}
      \State compute Darboux frame $(\hat{\mathbf{n}},\hat{\mathbf{f}},\hat{\mathbf{l}})$
             from $\mathbf{c}_{\mathit{prev}},\mathbf{c}_{\mathit{last}}$
             \hfill (Def.~\ref{def:darboux})
      \State $\mathit{next} \gets \arg\min_{j\in P}\,\varphi_j$;\;
             ties by $\arg\min_j\,|\det(\mathbf{c}_1,\mathbf{c}_{\mathit{last}},\mathbf{c}_j)|$
    \ElsIf{$r = \text{RZ}$}
      \State $\mathit{next} \gets \arg\min_{j\in P}\,(\mathbf{c}_j)_z$;\;
             $z$-ties by $\arg\min_j\,\det(\mathbf{c}_1,\mathbf{c}_{\mathit{last}},\mathbf{c}_j)$
    \EndIf
  \EndIf
  \State append $\mathit{next}$ to $\mathit{order}$;\;
         remove $\mathit{next}$ from all adjacency lists;\;
         $\mathit{prev} \gets \mathit{last}$;\;
         $\mathit{last} \gets \mathit{next}$
\EndWhile
\State \Return $\bigl(\mathit{order},\;|\mathit{order}| = |\mathit{faces}|\bigr)$
\end{algorithmic}
\end{algorithm}

\begin{figure}[htbp]
\centering
\includegraphics[width=0.8\textwidth]{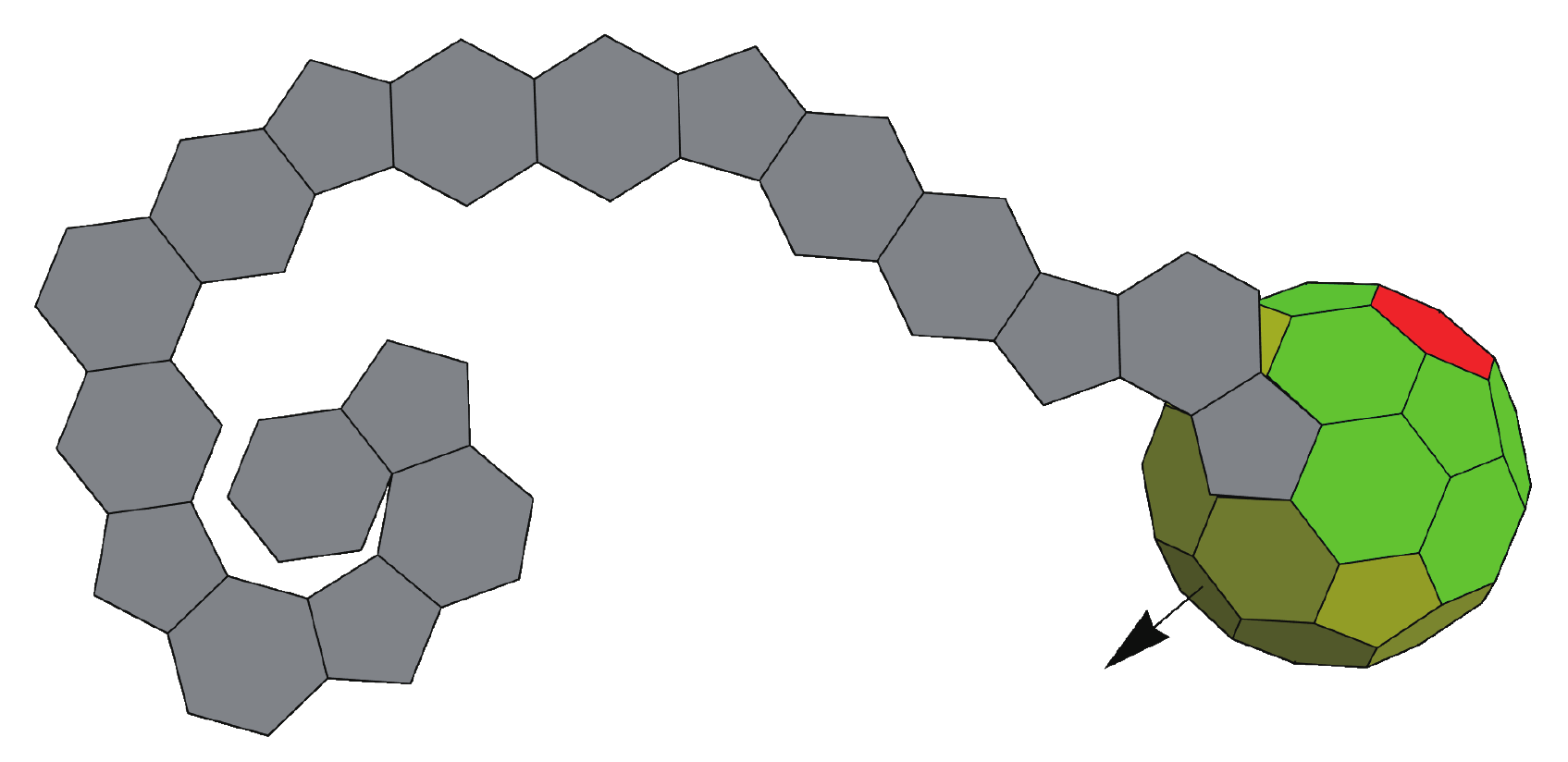}
\caption{Apple-Peel unfolding of the Truncated Icosahedron under RZ (max\,$z$),
  shown after 19 of 32 faces have been selected.
  Gray faces (left) show the unfolded net of the faces peeled so far,
  laid flat by unfolding each consecutive pair of faces along their shared edge.
  On the Truncated Icosahedron (right), olive faces have been peeled and
  green faces remain; the red face is the last face in the sequence,
  which coincides with the antipode of $F_1$ under RZ\@.
  The arrow indicates the direction of $\mathbf{c}_1$.}
\label{fig:peel-example}
\end{figure}

\subsection{Net Construction from Selection Order}
\label{sec:netconstruct}

Given a successful selection order $[F_1,\ldots,F_n]$ produced by
Algorithm~\ref{alg:peel}, a flat 2D net is obtained by sequentially
folding the accumulated strip of faces into the plane of the next face,
using the shared edge as a hinge.
Because $F_k$ ($k\ge2$) is always selected from the unvisited neighbors
of $F_{k-1}$, consecutive faces share exactly one edge.
Figure~\ref{fig:peel-example} shows this process in progress for the
truncated icosahedron after 19 of its 32 faces have been selected.

Let $\mathbf{n}_i$ denote the outward unit normal of face $F_i$.
At step $k$, the faces $F_1,\ldots,F_{k-1}$ (already coplanar after the
previous step) are rotated by the exterior dihedral angle
$\theta_k = \arccos(\mathbf{n}_{k-1}\cdot\mathbf{n}_k)$ around the shared
edge $F_{k-1}\cap F_k$, bringing the strip into $F_k$'s plane.
The rotation axis $\mathbf{a}_k = (\mathbf{n}_{k-1}\times\mathbf{n}_k)/\|\mathbf{n}_{k-1}\times\mathbf{n}_k\|$
is parallel to the shared edge.
After all $n$ steps the entire strip lies in $F_n$'s plane and is
projected to 2D.
Algorithm~\ref{alg:net2d} gives the complete procedure.

\begin{algorithm}[htbp]
\caption{2D Net Construction (3D polyhedra)}
\label{alg:net2d}
\begin{algorithmic}[1]
\Require vertex coordinates $\mathit{vers}$ (3D), face list $\mathit{faces}$,
         selection order $[F_1,\ldots,F_n]$
\Ensure 2D polygon list $\mathit{net} = [\Pi_1,\ldots,\Pi_n]$
\State Compute outward unit normal $\mathbf{n}_i$ for each face $F_i$
       \hfill\Comment{$\mathbf{n}_i \propto (v_2-v_1)\times(v_3-v_1)$, oriented so $\mathbf{n}_i\cdot\bar{v}_i>0$}
\State $S \gets [\,]$ \hfill\Comment{accumulates 3D vertex arrays of placed faces}
\For{$k \gets 2$ \textbf{to} $n$}
  \State Append vertex array of $F_{k-1}$ (original 3D coordinates) to $S$
  \State $\{u,w\} \gets \mathit{faces}[F_{k-1}]\cap \mathit{faces}[F_k]$
         \hfill\Comment{shared edge (hinge)}
  \State $\mathbf{a} \gets \mathrm{Normalize}(\mathbf{n}_{k-1}\times\mathbf{n}_k)$
         \hfill\Comment{rotation axis, parallel to shared edge}
  \State $\theta \gets \arccos\!\bigl(\mathrm{Clip}(\mathbf{n}_{k-1}\cdot\mathbf{n}_k,\,[-1,1])\bigr)$
         \hfill\Comment{exterior dihedral angle}
  \State Rotate every vertex in $S$ by $\theta$ around axis $\mathbf{a}$ through $\mathit{vers}[u]$
         \hfill\Comment{brings strip $F_1\ldots F_{k-1}$ coplanar with $F_k$}
\EndFor
\State Append vertex array of $F_n$ (original 3D coordinates) to $S$
\State Let $R$ be the rotation mapping $\mathbf{n}_n$ to $+z$
\State $\mathbf{p}[v] \gets (R\,v)_{xy}$ for each vertex $v$ in $S$
       \hfill\Comment{project to 2D}
\State Rotate all $\mathbf{p}$ so that $\mathbf{p}[\mathbf{c}_2]-\mathbf{p}[\mathbf{c}_1]$
       aligns with $+x$
       \hfill\Comment{canonical orientation}
\State \Return $\bigl[[\mathbf{p}[v]:v\!\in\! F_k]\;:\;k=1,\ldots,n\bigr]$
\end{algorithmic}
\end{algorithm}

\section{Results: 3D Polyhedra}
\label{sec:3d}

We classify each polyhedron according to the fraction of starting pairs
$(F_1,F_2)$ that yield a successful net under the best available rule:
\emph{Perfect} if every pair succeeds,
\emph{Possible} if at least one pair succeeds, and
\emph{Impossible} if no pair succeeds.
The same classification is used for 4D polytopes in
Section~\ref{sec:4d}.

\subsection{Platonic Solids}

Both RS and RZ achieve 100\% success on all five Platonic solids
with the fallback branch active, classifying each as \emph{Perfect}.
The uniform outcome across all starting pairs is a consequence of
equivariance under the face-transitive rotation group: the success
rate must be either $0\%$ or $100\%$ for each solid
(Proposition~\ref{prop:equivariance}).
Figure~\ref{fig:platonic-nets} shows a representative net for each solid.

\begin{figure}[htbp]
\centering
\includegraphics[width=\textwidth]{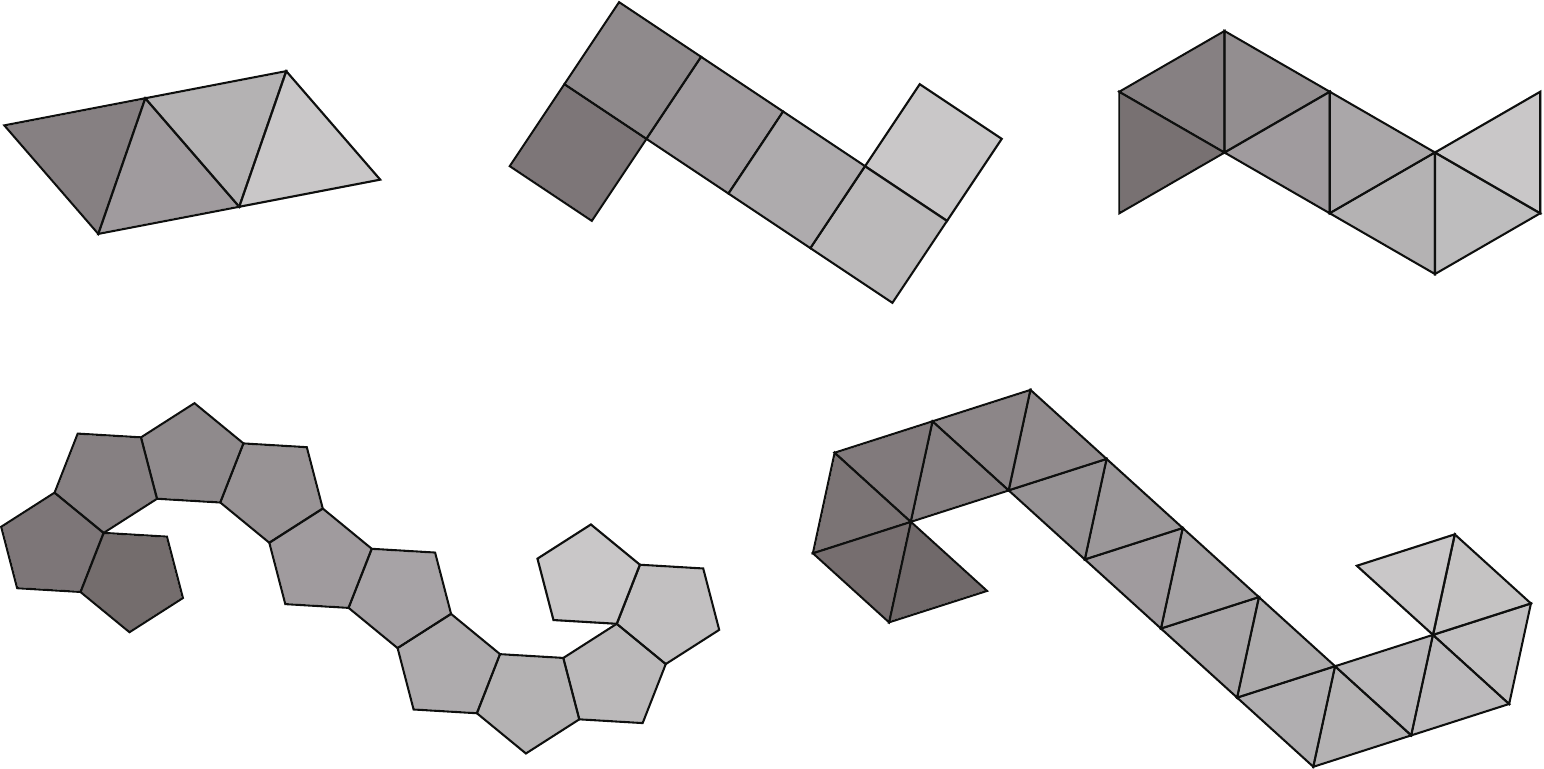}
\caption{Nets of the five Platonic solids obtained by
  Apple-Peel Unfolding (both RS and RZ rules produce identical nets for
  each solid).
  Face shading indicates the selection order:
  dark (first face $F_1$) through light (last face $F_n$).
  By equivariance, all successful orderings of a given solid produce
  congruent nets.}
\label{fig:platonic-nets}
\end{figure}

\begin{proposition}[Equivariance and the $0/100\%$ dichotomy on face-transitive polyhedra]
\label{prop:equivariance}
Let $P$ be a face-transitive convex polyhedron with rotation group $G$.
Then the Apple-Peel algorithm with global right
condition~\eqref{eq:left} and Det-based selection criteria
(rules RS and RZ) is \emph{equivariant} under $G$ in the
following sense: for every $\sigma\in G$ and every ordered adjacent
face pair $(F_1,F_2)$, the cell-selection sequence produced for
$(\sigma F_1,\sigma F_2)$ is the image under $\sigma$ of the
sequence produced for $(F_1,F_2)$.
In particular, the algorithm succeeds for $(F_1,F_2)$ if and only if
it succeeds for $(\sigma F_1,\sigma F_2)$, and since $G$ acts
transitively on ordered adjacent face pairs of a face-transitive
polyhedron, the success rate across all starting pairs is either
$0\%$ or $100\%$.
\end{proposition}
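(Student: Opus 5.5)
The plan is to prove equivariance by induction on the step index $k$, after first rewriting every quantity the algorithm uses in a form that does not depend on the Face-up rotation. The coordinate-free forms are these. Since the vertex set is centered at the origin, every $\sigma\in G$ acts as a linear map $A_\sigma\in\mathrm{SO}(3)$ with $A_\sigma\mathbf{c}_F=\mathbf{c}_{\sigma F}$ for every face $F$, and it preserves the adjacency relation. The Face-up rotation $Q_{F_1}$ only changes coordinates, and it also lies in $\mathrm{SO}(3)$. Hence
\[
\det(Q\mathbf{c}_1,Q\mathbf{c}_k,Q\mathbf{c}_j)=\det(\mathbf{c}_1,\mathbf{c}_k,\mathbf{c}_j),
\]
and the peeling coordinate $(\mathbf{c}_j)_z$ is simply $\mathbf{c}_j\cdot\mathbf{c}_1/\|\mathbf{c}_1\|$. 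So the right-candidate filter \eqref{eq:left}, the RS criterion, and the RZ criterion, including its $z$-tiebreak and its fallback, are all functions of the centroid triple $(\mathbf{c}_1,\mathbf{c}_k,\mathbf{c}_j)$ built from inner products and determinants. Each of these is invariant under replacing the triple by $(A\mathbf{c}_1,A\mathbf{c}_k,A\mathbf{c}_j)$ for $A\in\mathrm{SO}(3)$. The same holds for the RS fallback: in Definition~\ref{def:darboux}, $\hat{\mathbf{n}}$ and $\hat{\mathbf{f}}$ are built linearly from $\mathbf{c}_{k-1},\mathbf{c}_k$ and norms. Rotations commute with cross products, so $\hat{\mathbf{l}}\mapsto A\hat{\mathbf{l}}$, and therefore $\varphi_j$ in \eqref{eq:phi} is unchanged.

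With these invariances in hand, the induction runs as follows. Let $(F_1,\dots,F_m)$ be the sequence produced from $(F_1,F_2)$ and $(F'_1,\dots)$ the one produced from $(\sigma F_1,\sigma F_2)$. The hypothesis is $F'_i=\sigma F_i$ for $i\le k$. Then the set of unvisited neighbors $P'$ of $F'_k$ equals $\sigma P$, because $\sigma$ preserves adjacency and maps the visited set to the visited set. Likewise $R'=\sigma R$, because the filter value of $\sigma F_j$ equals that of $F_j$. Every score used to choose the next face (primary criterion and tiebreaks) takes the same value on $\sigma F_j$ as on $F_j$, so the argmin/argmax in the primed run is $\sigma F_{k+1}$. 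The branch taken (selection from $R$, fallback, or termination in variant n) is decided by emptiness of $R$ or $P$, and this is also transported. Hence the two runs have the same length, and the success statement follows.

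The step I expect to be the main obstacle is making the argmin/argmax well defined. If several candidates tie in every criterion, the chosen face would depend on the implementation's index order, which is not $G$-equivariant. I would handle this by assuming exact arithmetic, with $\varepsilon$ understood as absorbing only rounding, and by checking that the remaining tie cannot occur. Two distinct candidates $F_j\ne F_{j'}$ that tie in both $\det(\mathbf{c}_1,\mathbf{c}_k,\cdot)$ and $\mathbf{c}_1\cdot(\cdot)$ would lie on a common line parallel to $\mathbf{c}_1\times\mathbf{c}_k$'s orthogonal complement direction. I expect this can be excluded for neighbors of a single face; failing that, I would restate the result as equivariance for any tie-breaking rule that is itself $G$-invariant.

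For the final sentence, invariance of success under $\sigma$ shows that the success indicator is constant on each $G$-orbit of ordered adjacent pairs. It remains to check transitivity on these pairs. $G$ is transitive on faces, so it suffices that the stabilizer $G_{F_1}$ is transitive on the neighbors of $F_1$. For the Platonic solids this holds: $G_{F_1}$ is the cyclic group $C_p$ acting on the $p$ edges of a regular $p$-gon face. For a general face-transitive solid I would flag that this is an extra hypothesis (flag-transitivity of the rotation group). Without it, the argument yields only constancy of success on each orbit.
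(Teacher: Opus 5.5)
Your argument is the paper's argument written coordinate-free. The paper composes $A=\mathrm{FaceUp}(\sigma F_1)\circ\sigma\circ\mathrm{FaceUp}(F_1)^{-1}$, uses $\det A=1$, checks $A(+z)=+z$, and inducts on $k$. Your identity $(\mathbf{c}_j)_z=\mathbf{c}_j\cdot\mathbf{c}_1/\|\mathbf{c}_1\|$, together with $\mathrm{SO}(3)$-invariance of $\det$, does the same job without composing rotations. Where you go beyond the paper, your cautions are justified:
\begin{itemize}
\item The paper's proof never checks the RS Darboux fallback. Your observation that $\hat{\mathbf{l}}\mapsto A\hat{\mathbf{l}}$ supplies this. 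It also shows that dependence on $\mathbf{c}_{k-1}$ does not by itself destroy equivariance, since the rotation is fixed for the whole run. This contradicts the closing remark of the paper's proof about local references.
\item The paper settles exact ties only by an empirical check, and its stated RS tie-break by $|\det|$ cannot separate candidates already tied in $\det$. Your restatement for a $G$-invariant tie-break is therefore the honest form. A small correction to your geometry: candidates tied in both $\det(\mathbf{c}_1,\mathbf{c}_k,\cdot)$ and $\mathbf{c}_1\cdot(\cdot)$ differ by a multiple of $(\mathbf{c}_1\times\mathbf{c}_k)\times\mathbf{c}_1$.
\end{itemize}

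Most importantly, you are right that face-transitivity does not imply transitivity on ordered adjacent face pairs, so the paper's final step is unjustified as stated. For the rhombic dodecahedron, the rotation group $O$ has order $24$, but there are $48$ ordered adjacent face pairs. The face stabilizer $C_2$ splits the four edges of each rhombus into two orbits. The $0/100\%$ conclusion therefore needs your extra hypothesis that $G_F$ acts transitively on the edges of $F$. That hypothesis holds for the Platonic solids, the only case to which the paper applies the proposition.

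Do not call this hypothesis ``flag-transitivity of the rotation group.'' No rotation swaps the two flags sharing a given face and edge, so a rotation group is never flag-transitive. The correct notion is transitivity on incident (face, edge) pairs, which is what your preceding sentence already states.
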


\begin{proof}
Fix $\sigma\in G$ and write
$A = \mathrm{FaceUp}(\sigma F_1)\circ\sigma\circ\mathrm{FaceUp}(F_1)^{-1}$,
so that the Face-up--aligned centroids satisfy
$\mathbf{c}'_j = A\mathbf{c}_j$.
The map $A$ is a proper rotation in $\mathrm{SO}(3)$ (composition of
proper rotations), hence $\det(A)=1$ and
\begin{equation}
  \det(\mathbf{c}'_1,\mathbf{c}'_k,\mathbf{c}'_j)
  = \det(A)\,\det(\mathbf{c}_1,\mathbf{c}_k,\mathbf{c}_j)
  = \det(\mathbf{c}_1,\mathbf{c}_k,\mathbf{c}_j).
  \label{eq:detinvariant}
\end{equation}
Therefore the right-candidate set defined by the global
condition~\eqref{eq:left} is mapped bijectively onto itself by the
permutation that $\sigma$ induces on face labels.

The selection criteria are likewise $A$-invariant.
For RS the primary criterion $\det(\mathbf{c}_1,\mathbf{c}_k,\mathbf{c}_j)$
is invariant by~\eqref{eq:detinvariant}, so its $\arg\min$ commutes
with the action of $A$.
For RZ the primary criterion $z(\mathbf{c}_j)$ is preserved because
$A$ fixes the $+z$ axis.
To see this, trace the image of $+z$ through the three constituent
maps of $A$: the inverse Face-up $\mathrm{FaceUp}(F_1)^{-1}$ sends
$+z$ to $\mathbf{c}_{F_1}$ (the centroid of $F_1$ in the original
unrotated polyhedron), the symmetry $\sigma$ then sends
$\mathbf{c}_{F_1}$ to $\mathbf{c}_{\sigma F_1}$, and finally
$\mathrm{FaceUp}(\sigma F_1)$ sends $\mathbf{c}_{\sigma F_1}$ back to
$+z$.
Hence $A(+z) = +z$, so $A$ is a rotation about the $z$-axis and
preserves the $z$-coordinate of every centroid.
The Det tiebreak is invariant by~\eqref{eq:detinvariant}.
Geometric secondary criteria are used in place of face-index
tie-breaks (see Remark~\ref{rem:symmetry3d} below).

By induction on the step index $k$, the selection sequence transforms
$\sigma$-equivariantly:
$\mathit{order}_{\sigma F_1,\sigma F_2}(k)
= \sigma\bigl(\mathit{order}_{F_1,F_2}(k)\bigr)$
for every $k$.
Consequently the algorithm succeeds for $(F_1,F_2)$ if and only if it
succeeds for $(\sigma F_1,\sigma F_2)$.
Since $G$ acts transitively on ordered adjacent face pairs of a
face-transitive polyhedron, every pair lies in a single orbit and the
success rate is therefore $0\%$ or $100\%$.

A local reference $\mathbf{c}_{k-1}$ would not suffice for this
argument: the composite rotation $A$ depends on $(F_1,\sigma)$ and
changes between symmetry-equivalent steps, so the corresponding
candidate sets would no longer be in bijection under $A$.
\end{proof}

\begin{remark}[Implementation details required for exact equivariance]
\label{rem:symmetry3d}
Two implementation details are required for
Proposition~\ref{prop:equivariance} to hold exactly under
finite-precision arithmetic.
\begin{enumerate}[nosep]
  \item \emph{Epsilon threshold.}
    For the Dodecahedron (golden-ratio coordinates), finite-precision
    arithmetic causes $\det(\mathbf{c}_1,\mathbf{c}_k,\mathbf{c}_j)\approx 0$
    values to fluctuate by $\sim 10^{-16}$.
    The threshold $\varepsilon=10^{-10}$ prevents sign-flips that would
    classify the same candidate as right for one pair but not for a
    symmetry-equivalent pair.
  \item \emph{Geometric tie-breaks.}
    Any ties after the primary criterion are resolved by a geometric
    secondary criterion rather than by face index.
    For RS, the primary criterion is
    $\arg\min_j\,\det(\mathbf{c}_1,\mathbf{c}_k,\mathbf{c}_j)$;
    exact ties use
    $\arg\min_j\,|\det(\mathbf{c}_1,\mathbf{c}_k,\mathbf{c}_j)|$.
    For RZ, $z$-ties are broken by
    $\arg\min_j\,\det(\mathbf{c}_1,\mathbf{c}_k,\mathbf{c}_j)$.
    Both quantities are $\mathrm{SO}(3)$-invariant
    by~\eqref{eq:detinvariant}, preserving equivariance at every level;
    a tie-break by face index would depend on the arbitrary labeling
    of faces and violate equivariance.
    This Det-based convention is the direct 3D analogue of the tie-break
    adopted in the 4D extension (Section~\ref{sec:4dresults}).
    Verification on all five Platonic solids and thirteen Archimedean solids
    confirms that exact Det ties are never triggered in practice;
    the criterion serves as a formal safeguard for completeness.
\end{enumerate}
\end{remark}

\begin{remark}[Strict right condition ($\varepsilon<0$) as a formal variant]
\label{rem:strict-eps}
The standard threshold $\det(\mathbf{c}_1,\mathbf{c}_k,\mathbf{c}_j)\le \varepsilon$
(with $\varepsilon=10^{-10}$, i.e.\ $\varepsilon>0$) admits faces whose
determinant is near zero---those whose centroid lies approximately in the
meridian plane spanned by $\mathbf{c}_1$ and $\mathbf{c}_k$.
Geometrically, such a face lies ``due south'' along the current meridian
and thus neither turns the spiral clockwise nor counterclockwise.

We also tested the \emph{strict} variant $\varepsilon<0$, i.e.\
thresh $= -10^{-10}$, which excludes these boundary faces from
the right-candidate set $R$ and sends them to the fallback branch.
Table~\ref{tab:strict} summarizes the results for the Spiral rule (RS)
and Zonal rule (RZ) on all five Platonic solids.

\begin{table}[ht]
\centering
\caption{Success rates under the strict right condition
  (thresh $=-10^{-10}$, i.e.\ $\varepsilon<0$) for the Spiral rule (RS)
  and Zonal rule (RZ) on all five Platonic solids.
  RS(w)/RZ(w) = with fallback; RS(n)/RZ(n) = without fallback.}
\label{tab:strict}
\renewcommand{\arraystretch}{1.1}
\begin{tabular}{llrrrrr}
\toprule
Solid & $\{p,q\}$ & Pairs & RS(w) & RS(n) & RZ(w) & RZ(n) \\
\midrule
Tetrahedron  & $\{3,3\}$ & 12 & 100\% & 100\% & 100\% & 100\% \\
Cube         & $\{4,3\}$ & 24 & 100\% &   0\% & 100\% &   0\% \\
Octahedron   & $\{3,4\}$ & 24 & 100\% &   0\% & 100\% &   0\% \\
Dodecahedron & $\{5,3\}$ & 60 & 100\% &   0\% & 100\% &   0\% \\
Icosahedron  & $\{3,5\}$ & 60 & 100\% &   0\% & 100\% &   0\% \\
\bottomrule
\end{tabular}
\end{table}

Three observations:
\begin{enumerate}[nosep]
  \item \emph{Equivariance is preserved.}
    All entries are 0\% or 100\%, confirming that the strict condition is
    also equivariant (the determinant inequality is preserved under proper
    rotations regardless of the sign of the threshold).
  \item \emph{With fallback: 100\% for all solids.}
    The ``due south'' faces excluded from $L$ are redirected to the fallback
    branch.  The RS fallback ($\arg\min\varphi_j$) and the RZ fallback
    ($\arg\min z$) both select these faces via a different criterion, so the
    same peeling sequence is ultimately produced.
  \item \emph{Without fallback: 100\% only for the Tetrahedron.}
    The Tetrahedron has no face whose centroid lies exactly in the meridian
    plane after Face-up alignment, so the strict condition causes no
    additional terminations.  For all other Platonic solids there exists at
    least one step where the sole viable candidate has $\det\approx 0$; the
    strict condition empties $L$ at that step and the no-fallback variant
    terminates immediately.
\end{enumerate}
The standard $\varepsilon>0$ setting is therefore preferred: it is the
unique choice that (a) preserves equivariance and (b) maximizes the
success rate without fallback.
\end{remark}

\paragraph{Dodecahedron.}
The Dodecahedron (12 pentagonal faces) achieves 100\% success under
both RS and RZ with the fallback branch active.
Under the strict no-fallback variant ($\varepsilon<0$), the success
rate drops to 0\% for all Platonic solids except the Tetrahedron
(Table~\ref{tab:strict}).

RS takes the sharpest available right turn at each step, wrapping the
path clockwise around the surface and reaching the opposite pole in all
60 starting pairs.

Kaino \cite{Kaino2019} illustrates two net types for the Dodecahedron
(``Dodecahedron~1'' and ``Dodecahedron~2'' in Fig.~1 of that work).
The Apple-Peel algorithm under both RS and RZ produces only
``Dodecahedron~2''; ``Dodecahedron~1'' does not arise under the
right-half-space spiral constraint, indicating that the Apple-Peel
family does not exhaust all net topologies of the Dodecahedron.

\subsection{Archimedean Solids}

The Archimedean solid experiments apply the same Face-up pre-rotation
as the Platonic solid computations: the solid is rotated so that the
centroid of the first face $F_1$ aligns with the $+z$ axis before
running the algorithm.
The equal azimuthal spacing implied by
Proposition~\ref{prop:equivariance} does not apply here because
Archimedean solids lack the full face-transitive symmetry of Platonic
solids; however, the Face-up condition ensures
that comparisons across rules are made on a consistent geometric footing.

\begin{table}[ht]
\centering
\caption{Unfolding success rates (\%) on the 13 Archimedean solids.
  ``w'' = with fallback; ``n'' = no fallback.
  Bold entries mark the best result per row.}
\label{tab:archimedean}
\renewcommand{\arraystretch}{1.2}
\small\setlength{\tabcolsep}{4pt}
\begin{tabular}{llrrrrrr}
\toprule
& & & & \multicolumn{2}{c}{RS\,(\%)} & \multicolumn{2}{c}{RZ\,(\%)} \\
\cmidrule(lr){5-6}\cmidrule(lr){7-8}
Polyhedron & v.c. & $F$ & Pairs & w & n & w & n \\
\midrule
Truncated Tetrahedron        & 3.6.6       &  8 &  36 & \textbf{66.7} & \textbf{66.7} & 33.3 & 33.3 \\
Cuboctahedron                & 3.4.3.4     & 14 &  48 &  0.0 &  0.0 &  0.0 &  0.0 \\
Truncated Cube               & 3.8.8       & 14 &  72 &  0.0 &  0.0 &  0.0 &  0.0 \\
Truncated Octahedron         & 4.6.6       & 14 &  72 & 41.7 & 41.7 & \textbf{100.0} & \textbf{100.0} \\
Rhombicuboctahedron          & 3.4.4.4     & 26 &  96 &  0.0 &  0.0 &  0.0 &  0.0 \\
Truncated Cuboctahedron      & 4.6.8       & 26 & 144 &  0.0 &  0.0 & \textbf{100.0} & \textbf{100.0} \\
Snub Cube                    & 3.3.3.3.4   & 38 & 120 & 20.0 &  0.0 & \textbf{40.0} & 20.0 \\
Icosidodecahedron            & 3.5.3.5     & 32 & 120 &  0.0 &  0.0 &  0.0 &  0.0 \\
Truncated Dodecahedron       & 3.10.10     & 32 & 180 &  0.0 &  0.0 &  0.0 &  0.0 \\
Truncated Icosahedron        & 5.6.6       & 32 & 180 &  0.0 &  0.0 & \textbf{100.0} & \textbf{100.0} \\
Rhombicosidodecahedron       & 3.4.5.4     & 62 & 240 &  0.0 &  0.0 &  0.0 &  0.0 \\
Truncated Icosidodecahedron  & 4.6.10      & 62 & 360 &  0.0 &  0.0 & \textbf{66.7} & \textbf{66.7} \\
Snub Dodecahedron            & 3.3.3.3.5   & 92 & 300 &  0.0 &  0.0 &  0.0 &  0.0 \\
\bottomrule
\end{tabular}
\end{table}

\paragraph{Observations.}
\begin{itemize}
  \item \textbf{Seven of thirteen solids (54\%) yield 0\% under both RS and RZ}:
        cuboctahedron, truncated cube, rhombicuboctahedron, icosidodecahedron,
        truncated dodecahedron, rhombicosidodecahedron, and snub dodecahedron.
        All seven of these solids lack hexagonal faces, in contrast to four
        of the six non-zero solids (truncated octahedron, truncated
        cuboctahedron, truncated icosahedron, truncated icosidodecahedron),
        whose latitude-band structure built on hexagonal faces aligns well
        with the max-$z$ criterion; the truncated tetrahedron is a further
        hexagonal example.
        The snub cube is the only non-hexagonal solid in the non-zero group.
  \item \textbf{RZ achieves 100\% on three solids}:
        truncated octahedron, truncated cuboctahedron, and
        truncated icosahedron; it also achieves the best
        result on the truncated icosidodecahedron (66.7\%).
        RS achieves its highest rate on the truncated tetrahedron (66.7\%)
        and remains positive only on two further solids:
        truncated octahedron (41.7\%) and snub cube (20.0\%).
  \item \textbf{RZ outperforms RS on four solids.}
        Truncated octahedron (RZ\,100\% $>$ RS\,41.7\%),
        truncated cuboctahedron (RZ\,100\% $>$ RS\,0\%),
        truncated icosahedron (RZ\,100\% $>$ RS\,0\%),
        truncated icosidodecahedron (RZ\,66.7\% $>$ RS\,0\%).
        RS outperforms RZ only on the truncated tetrahedron
        (RS\,66.7\% $>$ RZ\,33.3\%).
  \item \textbf{Fallback is essential only for the Snub Cube.}
        Removing the fallback leaves 12 of 13 solids unchanged.
        For the snub cube: RS drops from 20.0\% (24/120) to 0\%,
        and RZ drops from 40.0\% (48/120) to 20.0\% (24/120).
  \item \textbf{Mirror symmetry: counts preserved.}
        For all 13 Archimedean solids, running the algorithm on the mirror
        image (reflection $(x,y,z)\mapsto(-x,y,z)$) yields identical success
        counts for RZ (verified; Section~\ref{sec:mirror}).
        For the chiral snub cube, the successful $(F_1,F_2)$ pairs differ
        between original and mirror image under RZ, revealing sensitivity
        to chirality at the orbit level.
\end{itemize}

\paragraph{Structural interpretation.}
The results across thirteen Archimedean solids support the following
structural principles:
\begin{description}
  \item[Face-type uniformity and RZ dominance]
    The three solids achieving 100\% under RZ (truncated octahedron,
    truncated cuboctahedron, truncated icosahedron) all combine hexagonal
    faces with one or two other regular polygon types; the truncated
    icosidodecahedron reaches 66.7\% under RZ and shares the same
    hexagonal-face structure.
    The latitude-band structure of hexagonal solids is well-suited to
    the max-$z$ criterion, which sweeps each band in turn.
    By contrast, RS (min-Det) takes the sharpest available clockwise turn
    at each step, which can navigate around the hexagonal bands in some
    configurations but follows a more irregular path that not all
    starting pairs can complete successfully.
    The seven solids yielding 0\% under both rules lack hexagonal faces
    entirely, breaking the latitude-band structure that the max-$z$ criterion
    relies on; instead, their face mixtures (triangles combined with
    squares, pentagons, octagons, or decagons depending on the solid)
    yield azimuthal arrangements in which no equivariant greedy rule
    completes the traversal.

  \item[RS outperformance: Truncated Tetrahedron]
    This is the only Archimedean solid where RS (66.7\%) strictly outperforms
    RZ (33.3\%).
    Its 8-face structure (4 triangles + 4 hexagons) with only 36 pairs is
    the smallest Archimedean case; the sharpest-turn criterion navigates
    the alternating face types more reliably than the latitude criterion.

  \item[Path geometry: RZ on the Truncated Icosahedron]
    RZ achieves 100\% and produces a smooth band-by-band traversal through
    consecutive hexagons at the same latitude before descending.
    RS is unsuccessful on this solid with the Det-based criterion.
    A representative RZ sequence is shown in Section~\ref{sec:ex-ti}.
\end{description}
Figure~\ref{fig:arch-all} shows representative nets for four of the six successful Archimedean solids under RZ\@;
the Truncated Icosahedron and Snub Cube are treated separately in
Sections~\ref{sec:ex-ti} and~\ref{sec:ex-snub}.
\begin{figure}[htbp]
\centering
\includegraphics[width=\textwidth]{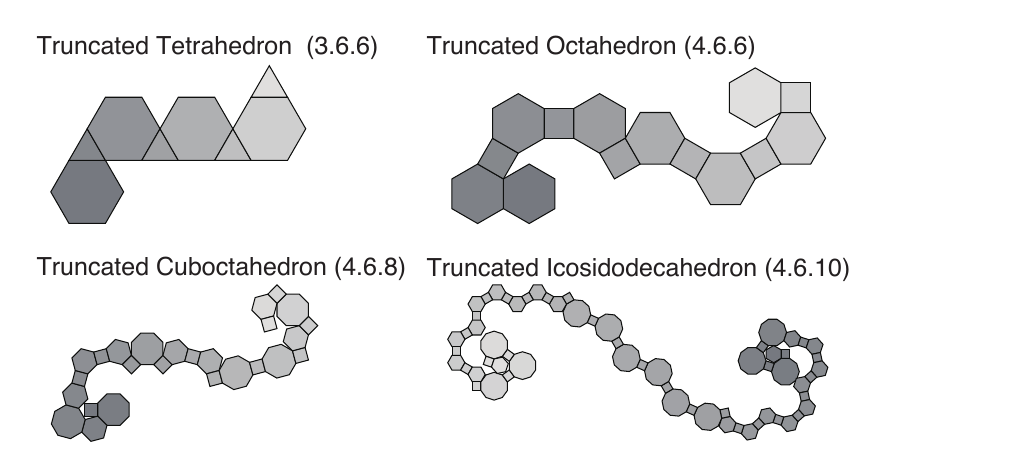}
\caption{Successful nets for four Archimedean solids: Truncated Tetrahedron
  (3.6.6, shown under RZ; RS achieves a higher rate of 66.7\%),
  Truncated Octahedron (4.6.6, RZ), Truncated Cuboctahedron (4.6.8, RZ),
  and Truncated Icosidodecahedron (4.6.10, RZ).
  Solids with all-zero success rates (7 of 13) are omitted;
  the Truncated Icosahedron and Snub Cube are discussed in
  Sections~\ref{sec:ex-ti} and~\ref{sec:ex-snub}.}
\label{fig:arch-all}
\end{figure}

\begin{figure}[htbp]
\centering
\includegraphics[width=0.9\textwidth]{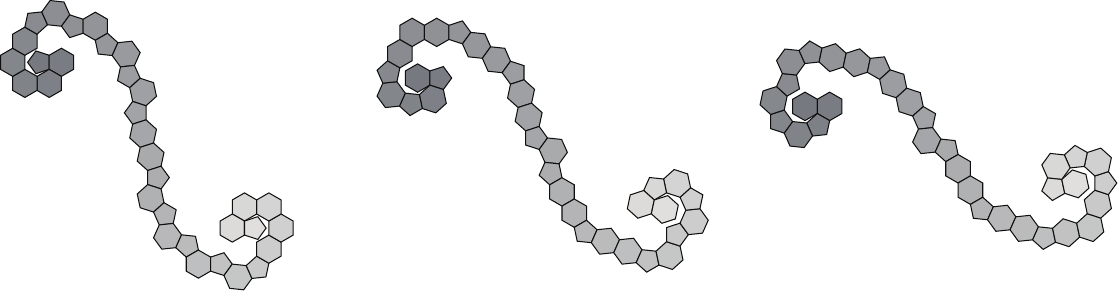}
\caption{Three representative nets of the Truncated Icosahedron under
  the Zonal rule (RZ, max\,$z$).
  Each net traverses the hexagonal faces band by band from north to south.}
\label{fig:ti-nets}
\end{figure}

\subsection{Mirror Symmetry and Chirality Detection}
\label{sec:mirror}

A reflection $\rho\colon (x,y,z)\mapsto(-x,y,z)$ maps each polyhedron to
its mirror image.
Because $\det(\rho)=-1$, the right half-space condition reverses sign:
$\det(\mathbf{c}_1,\mathbf{c}_k,\mathbf{c}_j)\le \varepsilon$ becomes
$-\det(\mathbf{c}_1,\mathbf{c}_k,\mathbf{c}_j)\le \varepsilon$.
Running the right-spiral algorithm on the mirror image is therefore equivalent
to running a \emph{left}-spiral algorithm on the original polyhedron.

\paragraph{Empirical check.}
We ran RS and RZ on both the original and the mirror image of all 13
Archimedean solids. The results are summarized in
Table~\ref{tab:mirror}; for all 13 solids the counts match exactly.

\begin{table}[ht]
\centering
\caption{Success counts for original vs.\ mirror image (all 13 Archimedean
  solids).  RZ counts were verified computationally;
  RS counts are inferred from the equivariance argument for the
  11 amphichiral solids and from the orbit-exchange behavior
  (analogous to RZ; cf.\ Section~\ref{sec:ex-snub})
  for the chiral snub cube.
  Counts are identical in every case.
  Numbers in parentheses indicate the count of geometrically distinct
  net types (equivariance classes of starting pairs);
  $-$ indicates zero successes.}
\label{tab:mirror}
\small
\setlength{\tabcolsep}{5pt}
\begin{tabular}{llrrrrr}
\toprule
Solid & v.c. & Pairs & \multicolumn{2}{c}{RS} & \multicolumn{2}{c}{RZ} \\
\cmidrule(lr){4-5}\cmidrule(lr){6-7}
 & & & Orig & Mirror & Orig & Mirror \\
\midrule
Truncated Tetrahedron       & 3.6.6       &  36 & 24\,(2) & 24\,(2) & 12\,(1) & 12\,(1) \\
Cuboctahedron               & 3.4.3.4     &  48 & \multicolumn{1}{c}{-} & \multicolumn{1}{c}{-} & \multicolumn{1}{c}{-} & \multicolumn{1}{c}{-} \\
Truncated Cube              & 3.8.8       &  72 & \multicolumn{1}{c}{-} & \multicolumn{1}{c}{-} & \multicolumn{1}{c}{-} & \multicolumn{1}{c}{-} \\
Truncated Octahedron        & 4.6.6       &  72 & 30\,(4) & 30\,(4) & 72\,(3) & 72\,(3) \\
Rhombicuboctahedron         & 3.4.4.4     &  96 & \multicolumn{1}{c}{-} & \multicolumn{1}{c}{-} & \multicolumn{1}{c}{-} & \multicolumn{1}{c}{-} \\
Truncated Cuboctahedron     & 4.6.8       & 144 & \multicolumn{1}{c}{-} & \multicolumn{1}{c}{-} &144\,(6) &144\,(6) \\
Snub Cube$^*$               & 3.3.3.3.4   & 120 & 24\,(1) & 24\,(1) & 48\,(2) & 48\,(2) \\
Icosidodecahedron           & 3.5.3.5     & 120 & \multicolumn{1}{c}{-} & \multicolumn{1}{c}{-} & \multicolumn{1}{c}{-} & \multicolumn{1}{c}{-} \\
Truncated Dodecahedron      & 3.10.10     & 180 & \multicolumn{1}{c}{-} & \multicolumn{1}{c}{-} & \multicolumn{1}{c}{-} & \multicolumn{1}{c}{-} \\
Truncated Icosahedron       & 5.6.6       & 180 & \multicolumn{1}{c}{-} & \multicolumn{1}{c}{-} &180\,(3) &180\,(3) \\
Rhombicosidodecahedron      & 3.4.5.4     & 240 & \multicolumn{1}{c}{-} & \multicolumn{1}{c}{-} & \multicolumn{1}{c}{-} & \multicolumn{1}{c}{-} \\
Truncated Icosidodecahedron & 4.6.10      & 360 & \multicolumn{1}{c}{-} & \multicolumn{1}{c}{-} &240\,(4) &240\,(4) \\
Snub Dodecahedron$^*$       & 3.3.3.3.5   & 300 & \multicolumn{1}{c}{-} & \multicolumn{1}{c}{-} & \multicolumn{1}{c}{-} & \multicolumn{1}{c}{-} \\
\bottomrule
\multicolumn{7}{l}{\small$^*$ Chiral solid (enantiomorphic pair).}
\end{tabular}
\end{table}

\paragraph{Amphichiral solids.}
For the 11 amphichiral solids, the mirror image is congruent to the
original under a proper rotation; equivariance then implies identical
counts.

\paragraph{Chiral solids.}
The two chiral solids require separate analysis.
The snub dodecahedron yields 0\% in both cases, so the counts agree
trivially.
For the snub cube, the counts agree but the successful \emph{pairs} differ;
the orbit-exchange structure is examined in Section~\ref{sec:ex-snub}.

\subsection{Truncated Icosahedron: Hexagonal-Band Path under RZ}
\label{sec:ex-ti}

The truncated icosahedron (soccer-ball pattern; 32 faces: 12 pentagons
and 20 hexagons) achieves 100\% success under RZ (all 180 starting pairs),
while RS (min-Det) yields 0\% on this solid.
This contrast illustrates how the two criteria navigate the hexagonal
latitude-band structure differently.

We denote each face type as $\mathtt{P}$ (pentagon) or $\mathtt{H}$
(hexagon).

\medskip
\noindent\textbf{RZ} (max\,$z$; representative starting pair, success 32/32):

\smallskip
\noindent\small
\textit{Types:} \texttt{P H H H H H P H P H H P H H P H H P H P H H P H P H P H P H H P}
\normalsize

The RZ path keeps longer runs of hexagonal faces (mean maximum consecutive
run $\approx 4.1$), traversing several consecutive hexagons at roughly the
same latitude before descending to the next ring---a \textbf{zonal band-by-band traversal}
around the peeling axis.
All 180 starting pairs succeed under RZ, confirming that the
max-$z$ criterion is compatible with the solid's hexagonal
latitude-band geometry.
Figure~\ref{fig:ti-nets} shows a sample of the resulting nets.

\subsection{Snub Cube: Equivariance and Chiral Structure}
\label{sec:ex-snub}

The snub cube (38 faces: 32 triangles and 6 squares) provides an
example where the success rates are partial and differ between rules:
RZ achieves 48 of 120 pairs (40.0\%) while RS achieves 24 of 120 (20.0\%).
The fallback branch is essential for this solid: removing it reduces
RS from 20.0\% to 0\% and RZ from 40.0\% to 20.0\%.

The 32 triangles decompose into two sub-orbits under the chiral
octahedral group $O$ (order 24): 24 \emph{snub} triangles (each sharing
one edge with a square) and 8 \emph{gyrate} triangles (sharing no edge
with a square).
Within each sub-orbit, every pair succeeds or fails uniformly under RZ,
confirming equivariance; the mixed outcome at the face-type level
reflects the distinct geometry of the two sub-orbits, not numerical
inconsistency.
Figure~\ref{fig:snub-nets} shows representative nets for each successful
orbit type under both chiralities:
(a)~Original, $F_1{=}\mathrm{snub}\,\triangle$;
(b)~Original, $F_1{=}\mathrm{gyrate}\,\triangle$;
(c)~Mirror, $F_1{=}\mathrm{snub}\,\triangle$;
(d)~Mirror, $F_1{=}\square$.

\begin{figure}[htbp]
\centering
\includegraphics[width=\textwidth]{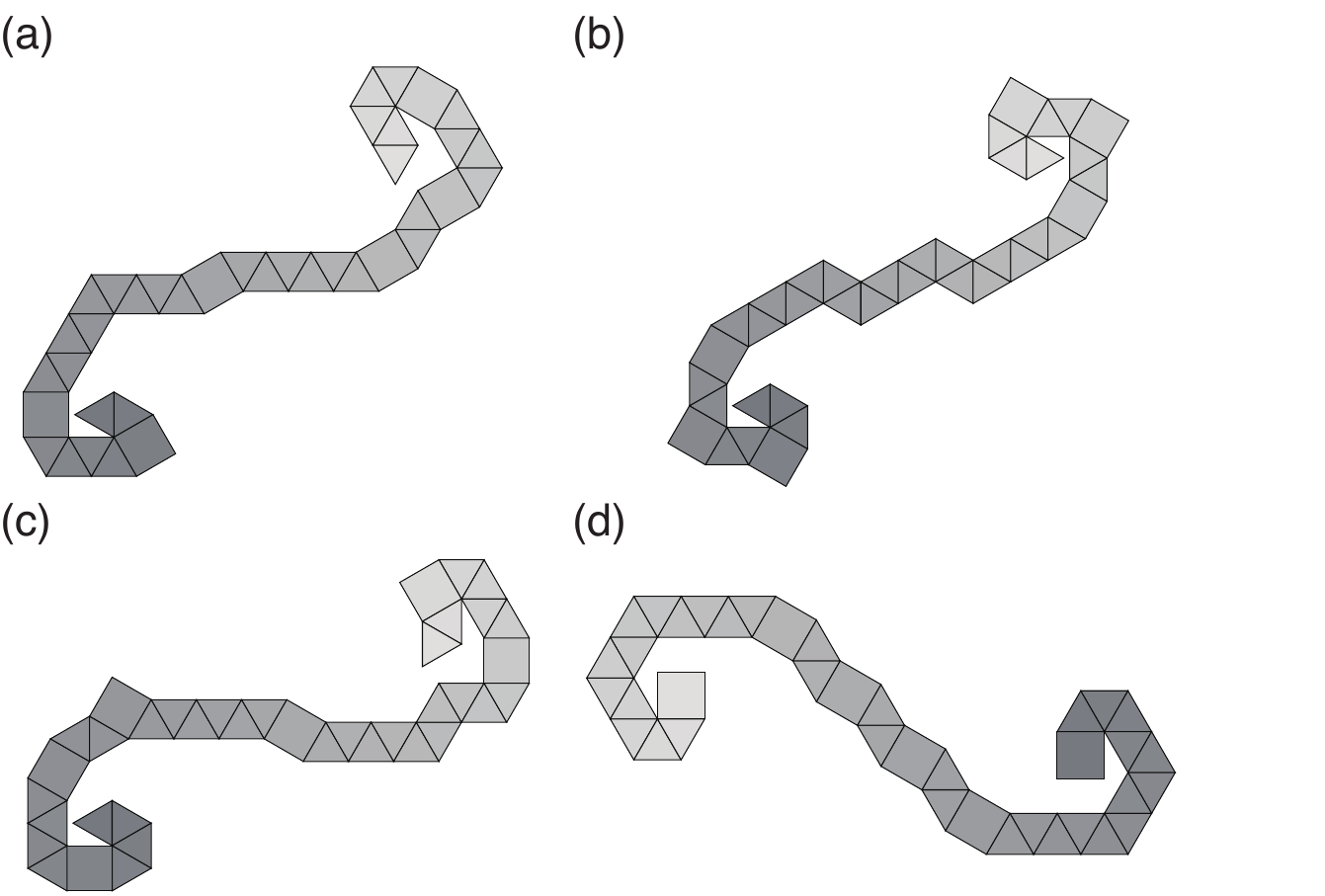}
\caption{Snub Cube (38 faces, RZ rule with fallback): one representative
  net per successful orbit type for each chirality.
  \emph{Original} (left-handed / laevorotatory, 48/120 pairs):
    coordinates taken directly from Mathematica's
    \texttt{PolyhedronData["SnubCube"]} (snub triangles arranged
    counter-clockwise when viewed from outside a square face);
    $F_1{=}\text{snub}\,\triangle$ and $F_1{=}\text{gyrate}\,\triangle$.
  \emph{Mirror} (right-handed / dextrorotatory, 48/120 pairs):
    mirror image of the original (x-coordinates negated);
    $F_1{=}\text{snub}\,\triangle$ and $F_1{=}\square$.
  The square-starting orbit succeeds only for one chirality;
  the two enantiomers share the snub-triangle orbit but have otherwise
  disjoint successful orbit types.
  Color encodes selection order: dark gray (first face)
  $\to$ light gray (last face).}
\label{fig:snub-nets}
\end{figure}

The orbit-exchange behavior under mirror symmetry further confirms that
the algorithm is sensitive to chirality at the orbit level.
Table~\ref{tab:snub-orbits} shows the breakdown by orbit type under RZ
for both the original and the mirror image.

\begin{table}[ht]
\centering
\caption{Success counts by orbit type for the Snub Cube under RZ
  (original and mirror image).}
\label{tab:snub-orbits}
\renewcommand{\arraystretch}{1.15}
\begin{tabular}{lccc}
\toprule
$(F_1\text{-type},\,F_2\text{-type})$ & $n$ & Original & Mirror \\
\midrule
$\{\mathrm{snub},\mathrm{snub}\}$   & 24 & 24/24 & 24/24 \\
$\{\mathrm{gyrate},\mathrm{snub}\}$ & 24 & 24/24 & \ 0/24 \\
$\{\mathrm{sq},\mathrm{snub}\}$     & 24 & \ 0/24 & 24/24 \\
$\{\mathrm{snub},\mathrm{gyrate}\}$ & 24 & \ 0/24 & \ 0/24 \\
$\{\mathrm{snub},\mathrm{sq}\}$     & 24 & \ 0/24 & \ 0/24 \\
\bottomrule
\end{tabular}
\end{table}

The $\{\mathrm{snub},\mathrm{snub}\}$ orbit (24 pairs) succeeds under
both left and right spirals.
The orbit $\{\mathrm{gyrate},\mathrm{snub}\}$ succeeds only under the
original (right spiral), while $\{\mathrm{sq},\mathrm{snub}\}$
succeeds only under the mirror (left spiral).
These two orbits are exchanged by the reflection, reflecting the
geometric fact that gyrate triangles and square faces occupy
antipodally complementary roles in the two enantiomorphs.
The equal orbit sizes (24 each) explain why the total count is identical
despite the different sets of successful pairs.
This behavior confirms that the algorithm correctly distinguishes the two
enantiomorphs of the snub cube, even though the global count is preserved.

\clearpage
\section{Extension to Four Dimensions}
\label{sec:4d}

\subsection{4D Algorithm}

A 4-polytope is represented by vertex coordinates in $\mathbb{R}^4$,
a list of 2-faces (polygons), and a list of 3-cells
(polyhedra given as lists of face indices); for the six regular
4-polytopes we use the standard coordinate sets tabulated by
Coxeter~\cite{Coxeter1973}.
Two cells are \textbf{adjacent} if they share at least one face.

The peeling axis is the $w$-axis.
The algorithm rotates the polytope so that the centroid of the starting
cell $C_1$ aligns with the $+w$ direction; we call this orientation
\textbf{cell-centroid-up}, the 4D analogue of the 3D Face-up rotation.
Despite the name's analogy with Face-up, cell-centroid-up does not align
any face, edge, or vertex of $C_1$ to a preferred direction; it
constrains only the position of $\mathbf{c}_1$ and leaves a residual
rotation about the $w$-axis unresolved.
The same kind of residual exists for 3D Face-up (rotation about $+z$),
but there the algorithm is fully invariant under it; in 4D the residual
is only partially absorbed by the algorithm
(see Remark~\ref{rem:partialequivar4d}).
It is anchored implicitly by the pair $(C_1, C_2)$: the determinant
condition~\eqref{eq:4dleft} uses $(\mathbf{c}_1,\mathbf{c}_2)$ as a
global reference frame.
The algorithm then proceeds as in the 3D case with cells replacing
faces and $w$ replacing $z$.
An alternative orientation, \textbf{3D-face-centroid-up}, aligns
$+w$ with the centroid of the shared 2-face between $C_1$ and a
chosen neighbor; this variant was computed for the 120-cell only,
and the results are presented in Section~\ref{sec:ex-120cell}.
The enumeration of nets of regular convex polytopes in dimension $\le 4$
was studied by Buekenhout and Parker~\cite{Buekenhout1998};
unfoldings and nets of regular polytopes are further explored
in~\cite{Devadoss2022}.

We employ the global $\mathbf{c}_1$--$\mathbf{c}_2$ determinant condition,
the direct 4D analogue of the 3D condition~\eqref{eq:left}:
\begin{equation}
  \det(\mathbf{c}_1,\,\mathbf{c}_2,\,\mathbf{c}_k,\,\mathbf{c}_j) \;\ge\; -\varepsilon,
  \quad \varepsilon = 10^{-10},
  \label{eq:4dleft}
\end{equation}
using a fixed global reference pair $(\mathbf{c}_1, \mathbf{c}_2)$.

\begin{proposition}[$\mathrm{SO}(4)$-invariance of the 4D filter and Det selection criterion]
\label{prop:equivariance4d}
For every $A\in\mathrm{SO}(4)$ and every four cell centroids
$\mathbf{c}_1,\mathbf{c}_2,\mathbf{c}_k,\mathbf{c}_j\in\mathbb{R}^4$,
$$
  \det(A\mathbf{c}_1,A\mathbf{c}_2,A\mathbf{c}_k,A\mathbf{c}_j)
  \;=\; \det(\mathbf{c}_1,\mathbf{c}_2,\mathbf{c}_k,\mathbf{c}_j).
$$
Consequently, both the right-candidate set defined
by~\eqref{eq:4dleft} and the Det-based selection score at $k\ge 3$
are invariant under any rotation $A\in\mathrm{SO}(4)$ that maps the
ordered pair $(C_1,C_2)$ to $(\sigma C_1,\sigma C_2)$ for some
symmetry $\sigma$ of the polytope.
\end{proposition}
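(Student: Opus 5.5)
The identity itself is multilinearity of the determinant in matrix form. I will write $M=[\mathbf{c}_1\,|\,\mathbf{c}_2\,|\,\mathbf{c}_k\,|\,\mathbf{c}_j]$ for the $4\times4$ matrix whose columns are the four centroids. Then $[A\mathbf{c}_1\,|\,A\mathbf{c}_2\,|\,A\mathbf{c}_k\,|\,A\mathbf{c}_j]=AM$, so the product formula gives $\det(AM)=\det(A)\det(M)=\det(M)$, since $A\in\mathrm{SO}(4)$ has $\det A=1$. This is exactly the 4D counterpart of~\eqref{eq:detinvariant} in the proof of Proposition~\ref{prop:equivariance}, and it needs no hypothesis on the vectors.

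For the ``consequently'' part, I will set up the relabeling as in the 3D argument. Let $\sigma$ be a symmetry of the polytope, and let $A$ be a rotation such that the cell-centroid-up coordinates of the run started at $(\sigma C_1,\sigma C_2)$ satisfy $\mathbf{c}'_{\sigma C}=A\mathbf{c}_C$ for every cell $C$. Under this correspondence the global reference pair is carried over: $\mathbf{c}'_1=A\mathbf{c}_1$ and $\mathbf{c}'_2=A\mathbf{c}_2$. Hence for any current cell $C_k$ and candidate $C_j$,
$$\det(\mathbf{c}'_1,\mathbf{c}'_2,\mathbf{c}'_{\sigma C_k},\mathbf{c}'_{\sigma C_j})=\det(\mathbf{c}_1,\mathbf{c}_2,\mathbf{c}_k,\mathbf{c}_j)$$
by the identity above. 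Two consequences follow.
\begin{itemize}
\item Filter~\eqref{eq:4dleft} holds for $C_j$ at $(C_1,C_2)$ if and only if it holds for $\sigma C_j$ at $(\sigma C_1,\sigma C_2)$. The threshold $-\varepsilon$ is compared against the same real number in both runs, so the right-candidate set is carried bijectively by $\sigma$.
\item The Det score used at steps $k\ge3$ takes identical values on corresponding candidates, so its $\arg\min$ or $\arg\max$ commutes with $\sigma$.
\end{itemize}

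The step I expect to need care is the existence and properness of $A$. Take $A=\mathrm{Up}(\sigma C_1)\circ\sigma\circ\mathrm{Up}(C_1)^{-1}$, where $\mathrm{Up}(C)$ denotes the cell-centroid-up rotation for starting cell $C$. The two $\mathrm{Up}$ factors are rotations, so $A\in\mathrm{SO}(4)$ precisely when $\sigma$ is a proper symmetry. If $\sigma$ is a reflection, $\det A=-1$ and every determinant flips sign, so the filter is \emph{not} preserved; this matches the mirror phenomenon of Section~\ref{sec:mirror}. I will therefore state explicitly that $\sigma$ ranges over the rotation subgroup, as the hypothesis ``$A\in\mathrm{SO}(4)$'' already implicitly requires.

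I also note that, unlike the 3D case, $A$ fixes $+w$ but is otherwise only determined up to the residual stabilizer. The statement needs nothing about that residual, because the determinant is invariant under all of $\mathrm{SO}(4)$. Only the $w$-coordinate criterion would require $A(+w)=+w$, and that is handled separately (cf.\ Remark~\ref{rem:partialequivar4d}).
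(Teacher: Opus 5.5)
Your proposal is correct and follows the paper's proof: the identity is $\det(AM)=\det(A)\det(M)$ with $\det A=1$, so the filter and the Det score take the same real values on $\sigma$-corresponding cells, and the right-candidate set is carried bijectively by the label permutation that $\sigma$ induces. You also construct $A=\mathrm{Up}(\sigma C_1)\circ\sigma\circ\mathrm{Up}(C_1)^{-1}$ explicitly and note that it lies in $\mathrm{SO}(4)$ only for proper $\sigma$; the paper leaves both points implicit in its hypothesis on $A$.
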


\begin{proof}
The first equality is immediate from
$\det(A\mathbf{c}_1,A\mathbf{c}_2,A\mathbf{c}_k,A\mathbf{c}_j)
= \det(A)\,\det(\mathbf{c}_1,\mathbf{c}_2,\mathbf{c}_k,\mathbf{c}_j)$
together with $\det(A)=1$ for $A\in\mathrm{SO}(4)$.
The filter
$\det(\mathbf{c}_1,\mathbf{c}_2,\mathbf{c}_k,\mathbf{c}_j)\ge-\varepsilon$
and the selection score
$\det(\mathbf{c}_1,\mathbf{c}_2,\mathbf{c}_k,\mathbf{c}_j)$ are
therefore unchanged by $A$, and the right-candidate set is mapped
bijectively under the label permutation induced by $\sigma$.
\end{proof}

\begin{remark}[Partial equivariance and the role of the $xy$-plane]
\label{rem:partialequivar4d}
Proposition~\ref{prop:equivariance4d} establishes invariance of the
$4$D Det quantities, but not full equivariance of the algorithm as a
whole.
Two further pieces of the algorithm depend on data that the symmetry
$\sigma$ does \emph{not} automatically preserve: (i) the $+w$
direction, which after cell-centroid-up alignment is fixed only by
the stabiliser of $\mathbf{c}_{C_1}$ in the polytope's symmetry
group; and (ii) the orientation of the $xy$-plane used by the $k=2$
score and by the $|\det|<\varepsilon$ fallback at $k\ge 3$, which is
an arbitrary choice during the cell-centroid-up rotation rather than
a $\sigma$-equivariant quantity.
Algorithm equivariance therefore holds only under symmetries that
simultaneously preserve $+w$ and the $xy$-plane orientation, which
in general is a strict subgroup of $\mathrm{Stab}(C_1,C_2)$.
This is why a $0/100\%$ dichotomy analogous to the 3D case
(Proposition~\ref{prop:equivariance}) does \emph{not} hold for the
4D algorithm; a partial success rate such as the $20/64$ obtained
for the 16-cell (Table~\ref{tab:4dglobal}) is consistent with this
weaker equivariance, as discussed further in
Remark~\ref{rem:symmetry}.
\end{remark}

We test both selection rules in this setting:
\begin{description}[nosep]
  \item[RZ (Zonal rule)] Primary: max $w$; secondary tie-break: max
    $\det(\mathbf{c}_1,\mathbf{c}_2,\mathbf{c}_k,\mathbf{c}_j)$.
    Fallback: min $w$, secondary max $|\det|$.
  \item[RS (Spiral rule)] Primary: max
    $\det(\mathbf{c}_1,\mathbf{c}_2,\mathbf{c}_k,\mathbf{c}_j)$;
    secondary tie-break: max $w$.
    Fallback: max $|\det|$, secondary min $w$.
\end{description}
Note that the 4D RS criterion is $\arg\max\det$, whereas the 3D RS
criterion is $\arg\min\det$ (Section~\ref{sec:rules}).
In both dimensions RS selects the right-candidate that is
\emph{most extreme along the admitted half-space direction}: in 3D
the filter~\eqref{eq:left} keeps candidates with
$\det\le+\varepsilon$, so the most extreme value (sharpest clockwise
turn) is the most \emph{negative} determinant, hence $\arg\min$;
in 4D the filter~\eqref{eq:4dleft} keeps candidates with
$\det\ge-\varepsilon$, so the most extreme value is the most
\emph{positive} determinant, hence $\arg\max$.
The two choices are the same geometric rule expressed under filter
conventions of opposite sign.
At step $k=2$ the repeated row $\mathbf{c}_k=\mathbf{c}_2$ forces
$\det=0$, so the geometric score degenerates to the signed
$xy$-area $\mathbf{c}_{2,x}\mathbf{c}_{j,y}-\mathbf{c}_{2,y}\mathbf{c}_{j,x}$ at that step.
Geometrically, the $xy$-area is the signed area swept on the
equatorial $xy$-plane (the orthogonal complement of $+w$) by the
oriented chord $\mathbf{c}_2\to\mathbf{c}_j$; it is the natural
``azimuthal turn'' indicator when the $4$D volume form vanishes
identically.

The same $xy$-area is used as a fallback score at $k\ge 3$ whenever
every candidate determinant satisfies $|\det|<\varepsilon$.
This case arises when the four points
$\mathbf{c}_1,\mathbf{c}_2,\mathbf{c}_k,\mathbf{c}_j$ are
(numerically) coplanar in $\mathbb{R}^4$ for every $j$, so the
volume-form score cannot discriminate among candidates---a
degeneracy observed for the 8-cell (where every cell centroid lies
on a coordinate axis) and at certain steps of the 120-cell traversal
near the south pole.
Without a fallback, the algorithm would in such steps be forced to
break the tie by face index, which is non-equivariant
(cf.\ Remark~\ref{rem:symmetry3d}).
The $xy$-area continuation is a geometrically meaningful and
$\sigma$-equivariant substitute (under symmetries that preserve the
$xy$-plane orientation, in line with
Remark~\ref{rem:partialequivar4d}), and---being identical to the
$k=2$ score---avoids introducing a third selection criterion.
Algorithm~\ref{alg:peel4d} gives the complete procedure.

\begin{algorithm}[tp]
\caption{Apple-Peel Unfolding (4D, global $\mathbf{c}_1$--$\mathbf{c}_2$ reference,
         rule $r\in\{\text{RS},\text{RZ}\}$,
         variant $v\in\{\text{w},\text{n}\}$).
         Variant w = with fallback; variant n = no fallback.}
\label{alg:peel4d}
\begin{algorithmic}[1]
\Require vertex coordinates, cells, adjacency lists,
         starting pair $(C_1,C_2)$, rule $r$, variant $v$,
         tolerance $\varepsilon=10^{-10}$
\Ensure cell selection order, success flag
\State Rotate (cell-centroid-up) so that $\mathbf{c}_{C_1}$ aligns with $+w$
\State $\mathit{order} \gets [C_1,\,C_2]$;\;
       $\mathit{last} \gets C_2$;\; $k \gets 2$
\While{$\mathit{last}$ has unvisited adjacent cells}
  \State $P \gets$ unvisited neighbors of $\mathit{last}$
  \If{$k = 2$}
    \State $R \gets P$;\quad
           $\mathit{score}(j) \gets \mathbf{c}_{2,x}\mathbf{c}_{j,y}-\mathbf{c}_{2,y}\mathbf{c}_{j,x}$
           \hfill ($xy$ cross product)
  \Else
    \State $R \gets \{\,j\in P : \det(\mathbf{c}_1,\mathbf{c}_2,\mathbf{c}_{\mathit{last}},\mathbf{c}_j)\ge -\varepsilon\,\}$
    \If{$\max_{j\in P}|\det(\mathbf{c}_1,\mathbf{c}_2,\mathbf{c}_{\mathit{last}},\mathbf{c}_j)| < \varepsilon$}
      \State $\mathit{score}(j) \gets \mathbf{c}_{2,x}\mathbf{c}_{j,y}-\mathbf{c}_{2,y}\mathbf{c}_{j,x}$
             \hfill (degenerate: $xy$ cross product)
    \Else
      \State $\mathit{score}(j) \gets \det(\mathbf{c}_1,\mathbf{c}_2,\mathbf{c}_{\mathit{last}},\mathbf{c}_j)$
    \EndIf
  \EndIf
  \If{$R \neq \emptyset$}
    \If{$r = \text{RZ}$}
      \State $\mathit{next} \gets \arg\max_{j\in R}(\mathbf{c}_j)_w$;\;
             $w$-ties broken by $\arg\max_j\,\mathit{score}(j)$
    \ElsIf{$r = \text{RS}$}
      \State $\mathit{next} \gets \arg\max_{j\in R}\,\mathit{score}(j)$;\;
             ties broken by $\arg\max_j\,(\mathbf{c}_j)_w$
    \EndIf
  \ElsIf{$v = \text{n}$} \Comment{no-fallback variant}
    \State \Return $(\mathit{order},\,\text{failure})$
  \Else \Comment{with-fallback variant: $R=\emptyset$, $v=\text{w}$}
    \If{$r = \text{RZ}$}
      \State $\mathit{next} \gets \arg\min_{j\in P}(\mathbf{c}_j)_w$;\;
             $w$-ties by $\arg\max_j\,|\mathit{score}(j)|$
    \ElsIf{$r = \text{RS}$}
      \State $\mathit{next} \gets \arg\max_{j\in P}|\mathit{score}(j)|$;\;
             ties by $\arg\min_j\,(\mathbf{c}_j)_w$
    \EndIf
  \EndIf
  \State append $\mathit{next}$ to $\mathit{order}$;\;
         remove $\mathit{next}$ from all adjacency lists;\;
         $\mathit{last} \gets \mathit{next}$;\; $k \gets k+1$
\EndWhile
\State \Return $\bigl(\mathit{order},\;|\mathit{order}|=|\mathit{cells}|\bigr)$
\end{algorithmic}
\end{algorithm}

\subsection{Results: Regular 4-Polytopes}
\label{sec:4dresults}

\begin{table}[ht]
\centering
\caption{Apple-Peel Unfolding with global $\mathbf{c}_1$--$\mathbf{c}_2$ reference.
  ``RZ'' = successful $(C_1,C_2)$ pairs under the Zonal rule;
  ``RS'' = successful pairs under the Spiral rule.
  Each successful pair produces a distinct cell-selection order;
  ``Class'' refers to RZ.}
\label{tab:4dglobal}
\renewcommand{\arraystretch}{1.2}
\begin{tabular}{lllrrrrl}
\toprule
Polytope & $\{p,q,r\}$ & Cell type & Cells & Total & RZ & RS & Class (RZ) \\
\midrule
5-cell   & $\{3,3,3\}$ & tetrahedron  &   5 &    20 &   20 &     20 & \textbf{Perfect}    \\
8-cell   & $\{4,3,3\}$ & cube         &   8 &    48 &   48 &     48 & \textbf{Perfect}    \\
16-cell  & $\{3,3,4\}$ & tetrahedron  &  16 &    64 &   20 &      0 & \textbf{Possible}   \\
24-cell  & $\{3,4,3\}$ & octahedron   &  24 &   192 &  192 &    192 & \textbf{Perfect}    \\
120-cell & $\{5,3,3\}$ & dodecahedron & 120 & 1{,}440 & 1{,}440 &      0 & \textbf{Perfect}    \\
600-cell & $\{3,3,5\}$ & tetrahedron  & 600 & 2{,}400 &      0 &      0 & \textbf{Impossible} \\
\bottomrule
\end{tabular}
\end{table}

\begin{remark}[Symmetry, orientation-sensitivity, and distinct net count]
\label{rem:symmetry}
For a regular polytope, the automorphism group acts transitively on
ordered adjacent cell pairs $(C_1, C_2)$.
The choice of pair affects the algorithm through two independent channels.
\begin{enumerate}[nosep,label=(\roman*)]
  \item \textbf{Cell-centroid-up orientation} (determined by $C_1$).
        The polytope is rotated so that $\mathbf{c}_{C_1}$ aligns with $+w$,
        fixing the coordinate system for all subsequent steps.
        Different choices of $C_1$ produce different rotations, changing
        the $w$-coordinates and $xy$-projections of every cell centroid.
  \item \textbf{Starting right condition} (determined by $C_2$).
        The right half-space filter uses the fixed reference plane
        spanned by $(\mathbf{c}_{C_1}, \mathbf{c}_{C_2})$
        (equation~\eqref{eq:4dleft}).
        Even when $C_1$ is fixed---and hence the coordinate system is
        unchanged---two different choices of $C_2$ define different
        reference planes, yielding different right-candidate sets at
        each step and potentially diverging selection sequences thereafter.
\end{enumerate}
Rules RS and RZ are sensitive to both channels, since they are defined
relative to the fixed $w$-axis and $xy$-plane.
A partial success rate (e.g.\ RZ achieves 20/64 on the 16-cell;
Table~\ref{tab:4dglobal})
therefore reflects the combined orientation-sensitivity of the rule,
not a combinatorial difficulty intrinsic to some starting pairs.

Regarding the distribution of $C_2$ choices: for a fixed $C_1$,
the number of $C_2$ candidates equals the number of faces of the cell
polyhedron---4 for the tetrahedral cells of the 5-cell, 16-cell, and 600-cell;
6 for the cubic cells of the 8-cell; 8 for the octahedral cells of the 24-cell;
and 12 for the dodecahedral cells of the 120-cell.
Unlike the 3D case---where the $C_p$ symmetry of the regular $p$-gon
face guarantees equal azimuthal spacing of $360^\circ/p$
(Proposition~\ref{prop:equivariance})---the equal-spacing property does not
extend to 4D in general.
After cell-centroid-up alignment of $C_1$, the neighboring cell centroids lie
in the three-dimensional equatorial hyperplane ($w=0$) and project onto
the $xy$-plane in a pattern that depends on the specific geometry of the
cell polyhedron and its embedding; the azimuthal spacing is not
uniformly $360^\circ/n$ for $n$ choices.

For the \textbf{5-cell}, both RS and RZ achieve 20/20.
The automorphism group $\mathfrak{S}_5$ (order 120) acts transitively on
the 20 ordered adjacent pairs, and despite the two channels above,
the deterministic algorithm succeeds for every pair and produces a
single abstract cell-selection sequence up to the relabeling induced
by the automorphism.
Consequently, the 20 listed successes are 20 symmetry-equivalent instances
of one underlying unfolding; the number of geometrically distinct
3D nets is \textbf{one}.
(Verification that all 20 three-dimensional net realizations are congruent
is straightforward from the sorted pairwise-distance spectrum of the
vertex sets.)
\end{remark}

\subsection{Three-Dimensional Realisation of 4D Unfoldings}
\label{sec:3dreal}

Each successful cell-selection order can be realized as a
\emph{three-dimensional net} by embedding the cells of the 4-polytope
in $\mathbb{R}^3$ one at a time in selection order.
The procedure consists of three steps:

\begin{enumerate}[nosep]
  \item \textbf{SVD projection.}
        Each cell's vertices (in $\mathbb{R}^4$) are projected onto their
        best-fit 3-dimensional hyperplane via singular value decomposition,
        yielding an isometric local embedding.
  \item \textbf{Procrustes alignment.}
        The local embedding of each successive cell is aligned to the
        previous cell by minimizing the root-mean-square displacement of
        the shared-face vertices (rotation and translation only, no
        scaling).
        If the new cell and the previous cell end up on the same side of
        the shared face, a mirror reflection through that face is applied
        to place the new cell on the exterior side.
  \item \textbf{Overlap check.}
        All pairs of non-adjacent cells are tested for volumetric overlap
        using the Separating Axis Theorem (SAT) with face normals and
        edge cross-products as candidate axes; bounding-box pre-filtering
        is applied for efficiency.
        A tolerance $\varepsilon = 10^{-6}$ ensures that shared faces are
        not counted as overlaps.
\end{enumerate}

Algorithm~\ref{alg:net3d} formalises this procedure.

\begin{algorithm}[htbp]
\caption{3D Net Construction (4D polytopes)}
\label{alg:net3d}
\begin{algorithmic}[1]
\Require 4D vertex coordinates $\mathit{vers}$ (in $\mathbb{R}^4$),
         cell list (each cell as a list of face indices),
         selection order $[C_1,\ldots,C_m]$
\Ensure 3D vertex positions $\{\mathbf{q}[v]\}$, validity flag
\State \textbf{SVD projection of $C_1$:}
       find the best-fit 3-hyperplane of $C_1$'s vertices via SVD;
       set $\mathbf{q}[v]$ to the projected 3D coordinates for each
       $v\in C_1$
\For{$k \gets 2$ \textbf{to} $m$}
  \State \textbf{SVD projection of $C_k$:}
         project $C_k$'s 4D vertices onto $C_k$'s best-fit 3-hyperplane;
         let $\tilde{\mathbf{q}}[v]$ denote the local 3D coordinates
  \State $S \gets \text{shared face between } C_{k-1} \text{ and } C_k$
         \hfill\Comment{common 2-face}
  \State \textbf{Procrustes alignment:}
         find rotation $R$ and translation $\mathbf{t}$ minimizing
         $\displaystyle\sum_{v\in S}
           \bigl\|\mathbf{q}[v]-(R\,\tilde{\mathbf{q}}[v]+\mathbf{t})\bigr\|^2$
  \State $\mathbf{q}[v] \gets R\,\tilde{\mathbf{q}}[v]+\mathbf{t}$
         for each $v\in C_k$
  \If{centroid of $C_k$ and centroid of $C_{k-1}$ lie on the \emph{same}
      side of face $S$}
    \State mirror-reflect all $\mathbf{q}[v]$ ($v\in C_k$) through the
           affine hyperplane of $S$
           \hfill\Comment{place $C_k$ on the exterior side}
  \EndIf
\EndFor
\State \textbf{Overlap check:}
       for each non-adjacent pair $(C_i,C_j)$, apply SAT with face
       normals and edge cross-products as separating-axis candidates
       (bounding-box pre-filter; tolerance $\varepsilon_{\mathrm{tol}}=10^{-6}$)
\State \Return $\bigl(\{\mathbf{q}[v]\},\;\text{all non-adjacent pairs separated}\bigr)$
\end{algorithmic}
\end{algorithm}

A realization with no overlapping cells is called a
\textbf{valid 3D net}.
Table~\ref{tab:3dnet} summarizes the results for all successful
orderings produced by RZ (the best-performing rule).

\begin{table}[ht]
\centering
\caption{Three-dimensional net validity for the successful RZ orderings.
  ``Valid'' = no overlapping cell pairs under SAT\@.
  Dashes (---) for the 600-cell indicate no successful orderings
  (Impossible; see Table~\ref{tab:4dglobal}).}
\label{tab:3dnet}
\renewcommand{\arraystretch}{1.2}
\begin{tabular}{lrrrr}
\toprule
Polytope & Orderings & Valid & Overlap & Valid\,(\%) \\
\midrule
5-cell   &   20  &   20  &     0  & 100.0 \\
8-cell   &   48  &   48  &     0  & 100.0 \\
16-cell  &   20  &   20  &     0  & 100.0 \\
24-cell  &  192  &  192  &     0  & 100.0 \\
120-cell & 1440  &    0  &  1440  &   0.0 \\
600-cell &   —   &   —   &    —   &    —  \\
\bottomrule
\end{tabular}
\end{table}

\paragraph{Observations.}
\begin{itemize}[nosep]
  \item The 5-cell, 8-cell, 16-cell, and 24-cell yield \textbf{100\% valid nets}:
        every successful Apple-Peel ordering embeds without overlap in
        $\mathbb{R}^3$.
  \item The 120-cell produces \textbf{zero valid nets} despite yielding
        1440 successful 4D orderings.
        The long spiral path through 120 dodecahedral cells accumulates
        enough curvature along the peel direction that adjacent latitude
        bands collide when laid out in $\mathbb{R}^3$, so self-intersection
        is unavoidable under the greedy strategy
        (see Section~\ref{sec:ex-120cell} for an empirical decomposition
        of overlap pairs by band membership).
  \item The sharp transition from perfect 3D validity (5-, 8-, 16-, 24-cell)
        to zero validity (120-cell) suggests a combinatorial threshold
        related to the number of cells visited along the peel path and the
        angular defect accumulated over the traversal.
\end{itemize}
Figure~\ref{fig:4d-nets} shows the resulting 3D nets for the four polytopes
with valid realizations.
Table~\ref{tab:summary} then consolidates all results across both dimensions
into a single best-rule classification, providing the cross-dimensional
reference point used in the discussion of structural patterns
(Section~\ref{sec:discussion}).

\begin{figure}[htbp]
\centering
\includegraphics[width=\textwidth]{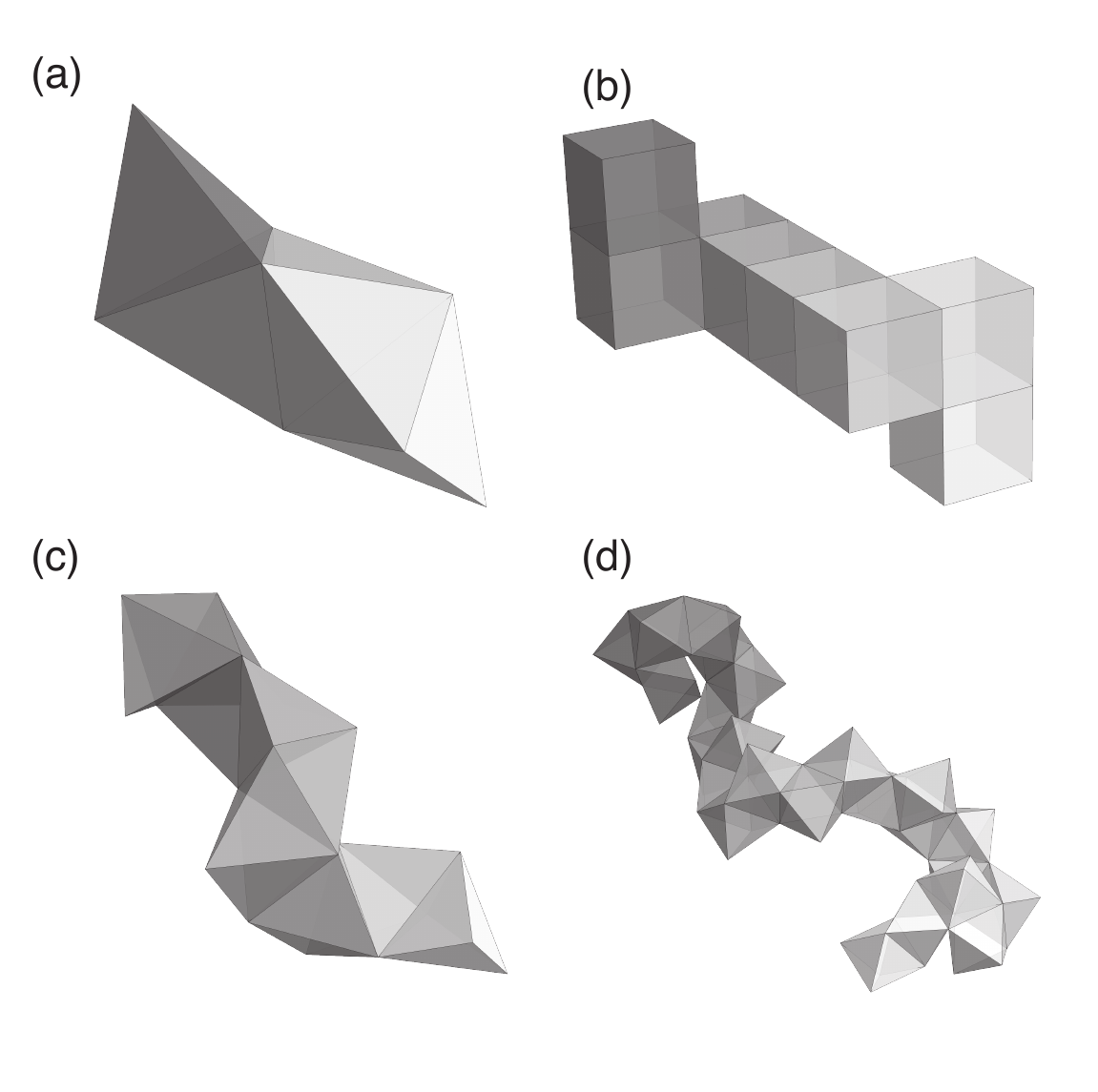}
\caption{Three-dimensional nets under Apple-Peel unfolding (RZ rule):
  (a)~5-cell, (b)~8-cell, (c)~16-cell, (d)~24-cell.
  For each polytope the net geometry is unique up to congruence:
  for the 5-cell ($20/20$), 8-cell ($48/48$, geo-uniqueness verified
  by sorted pairwise vertex-distance spectra), and 24-cell
  ($192/192$), every successful ordering yields the same 3D shape;
  for the 16-cell, $20$ of the $64$ ordered pairs succeed and the
  $20$ resulting realizations are likewise all congruent to the
  shape shown.
  Each net has no cell overlap (Table~\ref{tab:3dnet}).
  Color encodes selection order: dark gray (first cell)
  $\to$ light gray (last cell).}
\label{fig:4d-nets}
\end{figure}

\begin{table}[ht]
\centering
\caption{Classification by best rule across all objects studied.
  4D results use the global $\mathbf{c}_1$--$\mathbf{c}_2$ reference;
  ``Best\,\%'' shows the success rate of the better rule for each
  solid (RZ in all cases except the truncated tetrahedron, where
  RS is better);
  see Table~\ref{tab:4dglobal} for the RS vs.\ RZ comparison.}
\label{tab:summary}
\renewcommand{\arraystretch}{1.2}
\begin{tabular}{lrrl}
\toprule
Object & Best rule & Best\,\% & Class \\
\midrule
\multicolumn{4}{l}{\textit{3D Platonic solids}} \\
Tetrahedron  & any & 100.0 & Perfect \\
Cube         & any & 100.0 & Perfect \\
Octahedron   & any & 100.0 & Perfect \\
Dodecahedron & any & 100.0 & Perfect \\
Icosahedron  & any & 100.0 & Perfect \\
\multicolumn{4}{l}{\textit{3D Archimedean solids (vertex config.)}} \\
Truncated Octahedron (4.6.6)         & RZ & 100.0 & Perfect \\
Truncated Icosahedron (5.6.6)        & RZ & 100.0 & Perfect \\
Truncated Cuboctahedron (4.6.8)      & RZ & 100.0 & Perfect \\
Truncated Icosidodecahedron (4.6.10) & RZ &  66.7 & Possible \\
Snub Cube (3.3.3.3.4)                & RZ &  40.0 & Possible \\
Truncated Tetrahedron (3.6.6)        & RS &  66.7 & Possible \\
7 remaining solids                   & -- &   0.0 & Impossible \\
\multicolumn{4}{l}{\textit{4D regular polytopes (global $\mathbf{c}_1$--$\mathbf{c}_2$ reference)}} \\
5-cell   & RZ & 100.0 & Perfect \\
8-cell   & RZ & 100.0 & Perfect \\
16-cell  & RZ &  31.3 & Possible \\
24-cell  & RZ & 100.0 & Perfect \\
120-cell & RZ & 100.0 & Perfect \\
600-cell & --  &   0.0 & Impossible \\
\bottomrule
\end{tabular}
\end{table}

\clearpage
\section{Computational Examples}
\label{sec:examples}

We illustrate the algorithm's behavior on three representative
regular 4-polytopes.

\subsection{5-Cell and 16-Cell: rule agreement and partial coverage}
\label{sec:ex-5-16cell}

The 5-cell achieves a Perfect result (20/20 pairs) under both rules,
and---by the symmetry argument of Remark~\ref{rem:symmetry}---the
20 successful orderings are symmetry-equivalent and produce a single
3D-congruent net.
Kaino~\cite{Kaino2019} reports the same equivalence for the 5-, 8-,
16-, and 24-cells under a related spiral method.

The 16-cell is the converse extreme: although it shares the
4-neighbor cell graph of the 5-cell, RZ succeeds on only 20 of 64
pairs (31.3\%) and RS is Impossible (0/64), classifying the 16-cell
as Possible rather than Perfect (see Table~\ref{tab:4dglobal}).
On a representative failing pair RZ stalls one cell short of
completion, the remaining cell being non-adjacent to the last chosen
cell---the same dead-end mechanism analyzed in detail for the
600-cell in Section~\ref{sec:discussion}, here without the icosahedral
symmetry that makes the 600-cell uniformly Impossible.

\subsection{120-Cell: Perfect Result under RZ}
\label{sec:ex-120cell}

The 120-cell is Perfect under RZ and Impossible under RS
(Table~\ref{tab:4dglobal}).
Although the RZ primary criterion is max-$w$, the actual traversal
is not a monotone band-by-band descent: the algorithm zigzags
between adjacent latitude bands throughout the 119-step sequence,
and the high cell-connectivity (each dodecahedral cell has 12
neighbors) guarantees either a non-empty right-candidate set or a
viable fallback (min-$w$) bridge at every step---precisely the
combinatorial property the 4-regular 600-cell graph lacks
(Section~\ref{sec:4dresults}).

\paragraph{Latitude-band structure.}
After cell-centroid-up rotation, the 120 cells arrange into
\textbf{9 bands} (grouped by symmetry orbit) numbered from the south
pole ($w$ smallest) to the north pole ($w$ largest, where $C_1$ resides):
\[
  \underbrace{1}_{\text{band 1}} + 12 + 20 + 12 + 30 + 12 + 20 +
  \underbrace{12}_{\text{band 8}} +
  \underbrace{1}_{\text{band 9}} = 120,
\]
with the 12 cells of $C_2$ candidates all residing in band~8
(directly adjacent to $C_1$ in band~9),
confirming that max-$w$ selection at step~$k=2$
is always a tie broken by the $xy$-cross-product score.
The 30-cell equatorial band ($w=0$) is the largest; the fallback
(min-$w$) is essential for bridging this band when the right-half-space
condition excludes all forward candidates.

\paragraph{Geometric diversity of RZ nets.}
Although all 1,440 cell-centroid-up orderings succeed, they do not
produce 1,440 distinct net shapes.
For a fixed starting cell all 12 adjacent $C_2$ choices yield a
successful ordering, but geo-uniqueness testing
(pairwise centroid distances of the 3D embedding, rounded to $10^{-3}$)
reveals only \textbf{7 geometrically distinct net shapes}: two orbits
(sizes 4 and 3) and five singletons.

All 7 nets share a \emph{universal inner core}: the 3D distances
$r_{3D}$ of cell centroids in peeling order are identical for positions
$k=1,\ldots,13$ (the starting cell plus its 12 direct neighbors),
namely $r_{3D}\approx(0,\,1.70,\,1.79,\,1.80,\,1.81,\,1.82,\,1.81,
\,1.96,\,1.99,\,2.04,\,2.06,\,2.02,\,2.09)$.
The nets diverge only from $k=14$ onward.
Computing cylindrical coordinates $(r_{xy},z,\theta)$ of cell centroids
in the unfolded 3D net and summing the incremental turning angle reveals
three spiral-pattern types:
\begin{description}[nosep]
  \item[\textbf{Type A (CCW ascending):}] 5 of 7 nets wind counter-clockwise
    (positive total winding $+0.75$ to $+1.08$ turns), with $z$ increasing
    overall.  This is consistent with RZ's max-$xy$-cross-product
    tie-breaker, which preferentially selects cells in the
    counter-clockwise direction.
    Comprises the orbit of size~4 and four of the five singletons.
  \item[\textbf{Type B (CW / reversed):}] The remaining singleton winds clockwise
    ($-1.14$ turns, $z$: $-7$ to $+4$).
  \item[\textbf{Type C (near-zero / columnar):}] The orbit of size~3 has nearly
    zero net winding ($-0.07$ turns) with $z$ spanning $-4$ to $+23$,
    producing a column-like rather than spiral structure.
\end{description}
The six antipodal $C_2$ pairs (each sharing opposite poles of the
$C_1$ icosahedron) never share a geometric orbit, consistent with
the $k=2$ tie-breaker reversing sign for antipodal choices.
Figure~\ref{fig:120cell-unfold} shows two representative nets contrasting the dominant CCW spiral and the columnar structure.

\begin{figure}[htbp]
\centering
\includegraphics[width=0.85\textwidth]{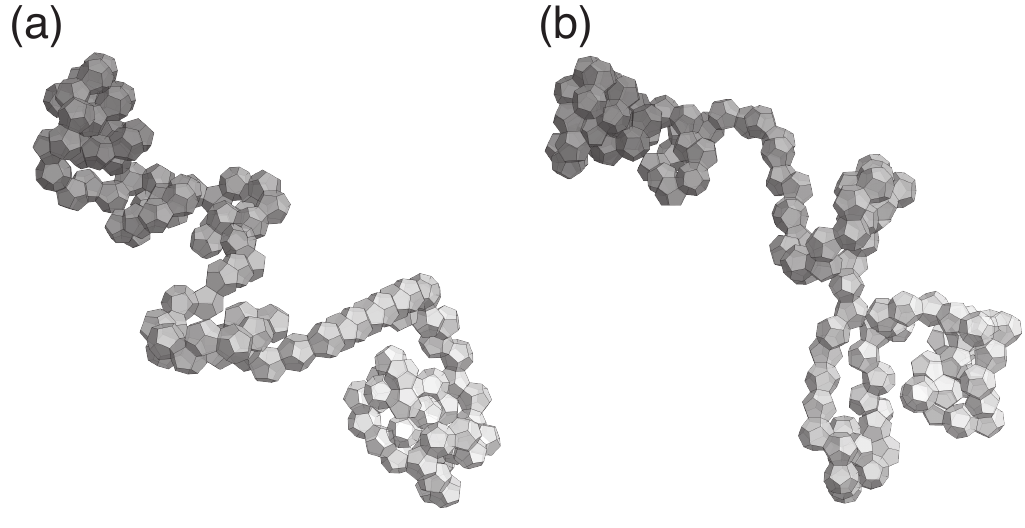}
\caption{Two representative three-dimensional nets of the 120-cell
  under RZ with fallback (120/120 cells), chosen to
  contrast the two most distinct spiral-pattern types among the
  7 geo-unique net shapes.
  \emph{Left} (a): Type~A --- CCW spiral ($+1.07$ turns,
  $z\in[-3,+26]$), the dominant pattern (5 of 7 geo-unique nets).
  \emph{Right} (b): Type~C --- near-zero winding ($-0.07$ turns,
  $z\in[-4,+23]$), a qualitatively different columnar structure.
  Color encodes selection order: dark gray (first cell)
  $\to$ light gray (last cell).
  Both realizations self-intersect in 3D space; no non-overlapping
  net exists for the 120-cell under this algorithm.}
\label{fig:120cell-unfold}
\end{figure}

\paragraph{Structure of self-intersections.}
To understand why all 120-cell nets self-intersect, we examined
whether overlapping cell pairs belong to the same latitude band
or to different bands.
For a fixed starting cell $C_1$, all 12 adjacent $C_2$ choices succeed,
yielding 12 cell-selection orderings;
across these 12 orderings (308 total overlap pairs),
71.4\% of overlaps occur between cells in \emph{different} bands,
while only 28.6\% occur within the same band.
Moreover, the cross-band overlaps are almost exclusively between
\emph{adjacent} bands ($k \leftrightarrow k{+}1$); overlaps between
non-adjacent bands account for only a handful of cases.
This indicates that the self-intersections arise at band boundaries:
as each successive latitude band is unfolded into 3D space,
its dodecahedral cells protrude slightly into the region already
occupied by the preceding band, rather than distant bands
colliding directly.

\paragraph{3D-face-centroid-up orientation.}
We tested whether replacing cell-centroid-up with
\emph{3D-face-centroid-up}---aligning $+w$ with the centroid of
the shared pentagon between $C_1$ and a neighbor, the direct
4D analogue of the 3D Face-up rotation---reduces overlaps.
For a fixed $C_1$, 7 of the 12 adjacent $C_2$ choices complete with
an average of 18.6 overlapping pairs
(vs.\ 24.6 under cell-centroid-up for the same 7 pairs, a $\approx$24\% reduction);
the remaining 5 $C_2$ choices fail to complete entirely.
No valid non-overlapping 3D net is obtained under either orientation,
confirming that the self-intersection of 120-cell nets is structural
rather than an artifact of the initial alignment.
Kaino~\cite{Kaino2019} also studies the 120-cell, but treats it by
examining its layer structure separately rather than applying a greedy
traversal, so the two approaches are not directly comparable.

\subsection{600-Cell: Impossibility and Dead-End Structure}
\label{sec:ex-600cell}

The 600-cell is Impossible under both RZ and RS.
Despite having only 4 neighbors per cell (the same as the 16-cell, which
is Possible under RZ), the 600 tetrahedral cells create a combinatorial
structure that no greedy rule can traverse completely.
Analysis of the termination-step distribution reveals that all 2,400
evaluated $(C_1, C_2)$ pairs halt at one of exactly five step values
(146, 150, 276, 279, or 284), indicating structural bottlenecks related
to the 600-cell's icosahedral symmetry.
We further tested one-step backtracking, 3D-face-centroid-up orientation,
and vertex-up orientation; all variants remain Impossible,
confirming that the obstacle is structural rather than an artifact
of the specific greedy heuristic or orientation choice.

\paragraph{Worked example of a dead-end.}
The mechanism that produces these failures is not exclusion by the
right half-space condition~\eqref{eq:4dleft}, but exhaustion of
\emph{unvisited} neighbors at a cell whose four-cell neighborhood has
already been consumed by earlier RZ steps.
Concretely, for a representative starting pair the algorithm runs
without incident for the first 145 steps and then arrives at step
$k = 146$, at which point all four neighbors of the selected cell
have already appeared in the order.
The four neighbors and their state are summarized in
Table~\ref{tab:600stuck-example}: three sit at a higher $w$-level
and were selected earlier by the RZ max-$w$ greedy rule,
while the unique lower-$w$ exit had already been entered
via a different branch of the spiral.
The Det filter is irrelevant at this step---it operates only on
\emph{unvisited} candidates, of which there are none.

\begin{table}[ht]
\centering
\caption{Per-neighbor state at the stuck step $k=146$ for a
  representative starting pair of the 600-cell under rule RZ
  (with fallback).
  All four neighbors are already in the selection order, leaving the
  set of unvisited candidates empty.  The Det column shows
  $\det(\mathbf{c}_{C_1},\mathbf{c}_{C_2},\mathbf{c}_{\mathrm{last}},\mathbf{c}_j)$
  for completeness; its sign is irrelevant since no unvisited
  candidate remains.}
\label{tab:600stuck-example}
\renewcommand{\arraystretch}{1.15}
\begin{tabular}{llrl}
\toprule
neighbor & $w$-level & $\det(\mathbf{c}_{C_1},\mathbf{c}_{C_2},\mathbf{c}_{\mathrm{last}},\mathbf{c}_j)$ & status \\
\midrule
upper 1  & high ($> w_{\mathrm{last}}$) & $-0.454$    & visited (earlier in order) \\
upper 2  & high ($> w_{\mathrm{last}}$) & $+0.454$    & visited (earlier in order) \\
upper 3  & high ($> w_{\mathrm{last}}$) & $\approx 0$ & visited (earlier in order) \\
lower    & low  ($< w_{\mathrm{last}}$) & $\approx 0$ & visited (earlier in order) \\
\bottomrule
\end{tabular}
\end{table}

The same dead-end mechanism repeats for the remaining $C_2$
choices adjacent to the same $C_1$: some terminate at $k=284$ with
analogous fully-visited neighborhoods, others at $k=146$ at a
different cell with the same pattern.
Empirically, all 2,400 starting pairs terminate at one of exactly
five steps, and the 2,400 pairs partition into orbits under the
rotation group of the 600-cell whose sizes ($742$, $825$, $404$,
$323$, $106$) match the termination-step counts shown earlier; the
4D analogue of the equivariance argument of
Proposition~\ref{prop:equivariance} (see
Proposition~\ref{prop:equivariance4d} above) accounts for the
constancy of the termination step within each orbit, but the
specific partition into five orbits---and, in particular, the values
of the five termination steps---is established here by direct
computation rather than by a structural argument.
The structural reason why these dead-ends are unavoidable is analyzed
in Section~\ref{sec:discussion}.

\clearpage
\section{Discussion}
\label{sec:discussion}

Across both dimensions, high success rates correlate with
\textbf{uniform face types}.
Platonic solids and the 4D analogues (5-cell, 8-cell) with identical
cells perform best.
Solids with mixed face types---square-containing Archimedean solids
and the 600-cell---show low or zero rates.
The 120-cell is an exception: despite its complex dodecahedral cells it
achieves a Perfect result under RZ (100\%), owing to the high connectivity
(12 neighbors per cell) that keeps the determinant-based filter from
blocking all candidates.

\subsection{Spiral vs.\ Zonal Strategy in Four Dimensions}

A striking feature of the 4D results is the asymmetry between RS and RZ.
While both rules agree on the 5-cell, 8-cell, and 24-cell, RS fails
completely on the 16-cell (0/64) and the 120-cell (0/1,440),
whereas RZ achieves Possible and Perfect results respectively.

The difference stems from the distinct nature of the two primary criteria.
RZ's max-$w$ criterion provides a \emph{global monotone descent}: at every
step, RZ moves towards the ``south pole'' of the $w$-axis, a direction that
is always well-defined and consistent throughout the traversal.
This monotonicity makes RZ robust to local dead-ends---when no right
candidate exists, the fallback (min-$w$) continues the southward descent.

RS's max-det criterion, by contrast, requires \emph{consistent spiral
curvature} relative to the fixed $\mathbf{c}_1$--$\mathbf{c}_2$ plane.
In 3D, the face-up rotation and the regular structure of Platonic solids
ensure that the spiral curvature is compatible with the right-half-space
filter at every step.
In 4D, however, the equatorial hyperplane ($w=0$) contains cell centroids
in a three-dimensional arrangement that reflects the geometry of the cell
polyhedron.
For the 120-cell, whose cells are regular dodecahedra, the 12 neighbors
of each cell form an icosahedral arrangement in the equatorial hyperplane.
The max-det criterion selects the candidate that maximizes signed
four-dimensional volume, but the icosahedral symmetry creates configurations
where the right-half-space filter and the max-det selection are mutually
incompatible: the filter admits only candidates whose determinant is
near zero, and among those, max-det provides no meaningful ordering.
The result is systematic failure.
RZ avoids this by falling back to max-$w$, which resolves the degeneracy
via a globally consistent criterion.

A further structural property of RZ on regular 4-polytopes reinforces this picture.
For any regular 4-polytope after cell-centroid-up, the cells adjacent
to $C_1$---those sharing a face with $C_1$ (the \emph{first-neighbor
shell}; for the 120-cell this coincides with band~8 in the latitude-band
partition of Section~\ref{sec:ex-120cell})---lie at exactly equal
$w$-distances from the north pole: the symmetry stabiliser of $+w$ acts
transitively on these centroids.
The max-$w$ primary criterion therefore cannot distinguish among
first-shell candidates, so the geo-score tiebreaker governs the entire
first-shell traversal.
Since $\mathbf{c}_1 = (0,0,0,w_1)$ after cell-centroid-up, the
four-dimensional determinant reduces at $k\ge 3$ to
\[
  \det(\mathbf{c}_1,\mathbf{c}_2,\mathbf{c}_k,\mathbf{c}_j)
  \;=\; -w_1 \cdot \det_{\!xyz}(\mathbf{c}_2,\mathbf{c}_k,\mathbf{c}_j),
\]
so for $k\ge 3$ the ordering within the first-neighbor shell is governed
by a three-dimensional determinant criterion applied to the centroids of
the cell polyhedron of $C_1$.
At $k=2$, where the volume form degenerates, the algorithm uses the
$xy$-area score (Algorithm~\ref{alg:peel4d}), the two-dimensional
analogue of the same azimuthal-turn criterion.
This reduction holds for all six regular 4-polytopes: the 120-cell uses
the regular dodecahedron, the 8-cell uses the cube, the 24-cell uses the
octahedron, and the 5-cell, 16-cell, and 600-cell each use the regular
tetrahedron as their local cell polyhedron.
As a consequence, for \emph{every} regular 4-polytope all first-shell
cells are exhausted before any cell at deeper $w$ is visited, regardless
of the $C_2$ choice.
For the 120-cell in particular, this explains the universal inner core
observed in Section~\ref{sec:ex-120cell}: the set of cells comprising
the first-neighbor shell is always the same regardless of $C_2$ choice,
and structural divergence into distinct spiral patterns appears only
once the algorithm leaves the first-neighbor shell (at $k=14$, i.e.,
from band~7 onward in the latitude-band partition).

The 600-cell exemplifies the structural limit of the greedy approach
(see Section~\ref{sec:ex-600cell} for the computational analysis and
worked example).
At the higher level the obstruction is that the
RZ rule, by repeatedly preferring max-$w$ neighbors, fills the
\emph{outer shell} of cells (those of locally maximal $w$) before
their unique lower-$w$ exits are reached; in the 4-regular 600-cell
neighborhood graph this leaves the local exits combinatorially blocked,
whereas in the 12-regular 120-cell graph an alternative lower-$w$
neighbor remains available at every step.
The high local connectivity of the 120-cell---rather than the global
fallback rule---is what allows RZ to clear every starting pair.

\subsection{Relation to the Companion Implementation}

The algorithm implemented here differs in three respects from the
companion paper~\cite{Yoshino2026arXiv}.
First, the right-half-space filter there uses the \emph{current} face
centroid $\mathbf{c}_k$ as reference, with the strict condition
$(\mathbf{c}_k \times \hat{z})_{xy}\cdot\mathbf{c}_{j,xy}>0$,
rather than the global fixed reference $\mathbf{c}_1$ with tolerance
$\varepsilon$ adopted here.
Second, tie-breaking among candidates with equal $z$ resolves by list
order (first occurrence in the face-index sequence) rather than by the
equivariant min-Det criterion.
Third, the RS (min-Det) rule is new to the present paper; the companion
implements only max-$z$ selection.

The local reference and list-order tie-break implicitly break equivariance:
the filter direction rotates at every step, and tie outcomes depend on the
internal ordering of faces in the data structure.
Consequently, the companion implementation reports fractional success rates
for symmetric solids---for instance, 53.3\% for the dodecahedron---whereas
equivariance theory predicts all-or-nothing rates.
The global $\mathbf{c}_1$ reference, equivariant min-Det tie-break, and
$\varepsilon$-threshold adopted here restore equivariance and are more
robust to floating-point rounding.

\section{Conclusion}

We evaluated two Apple-Peel Unfolding rules (RS, RZ) on 24 objects
across two dimensions.
The main findings are:
\begin{enumerate}
  \item \textbf{RZ dominates in 4D.}
        Under the global $\mathbf{c}_1$--$\mathbf{c}_2$ reference algorithm, RZ achieves
        Perfect on the 5-cell, 8-cell, 24-cell, and \textbf{120-cell} (100\%),
        and Possible on the 16-cell (31.3\%).
        RS is Impossible on the 16-cell and 120-cell, and matches RZ only
        on the 5-cell, 8-cell, and 24-cell.
  \item \textbf{RZ achieves 100\%} on three of thirteen Archimedean
        solids (truncated octahedron, truncated icosahedron, truncated
        cuboctahedron), and 66.7\% on the truncated icosidodecahedron;
        the snub cube and truncated tetrahedron yield partial success
        (RZ 40\% and RZ 33.3\%, respectively);
        seven of thirteen remain Impossible under both rules.
  \item \textbf{RZ outperforms RS on these four solids}:
        truncated octahedron (100\% vs.\ 41.7\%),
        truncated cuboctahedron (100\% vs.\ 0\%),
        truncated icosahedron (100\% vs.\ 0\%),
        truncated icosidodecahedron (66.7\% vs.\ 0\%).
        The sole case where RS outperforms RZ is the
        \textbf{truncated tetrahedron} (RS 66.7\% vs.\ RZ 33.3\%).
  \item The 600-cell and seven Archimedean solids are Impossible;
        impossibility correlates with structural non-uniformity or
        high combinatorial complexity.
        The 600-cell terminates at one of five fixed steps for all pairs,
        indicating icosahedral bottlenecks.
\end{enumerate}

Several concrete directions are open for future work.
First, \emph{hybrid and parametrised selection rules}: a one-parameter
family $\lambda\,z + (1-\lambda)\,\det(\mathbf{c}_1,\mathbf{c}_k,\mathbf{c}_j)$
interpolating between RZ ($\lambda{=}1$) and RS ($\lambda{=}0$) would
allow a systematic search for a crossover value at which currently
Impossible solids (such as the cuboctahedron or icosidodecahedron)
become Possible.
Second, \emph{characterizing which Apple-Peel orderings produce
geometrically valid 3D nets}: every 120-cell ordering succeeds at the
combinatorial level but every 3D realization self-intersects, so a
predictive criterion---formulated in terms of the Darboux-frame
torsion along the spiral or per-cell handedness choices at hinge
faces---would convert combinatorial success into a 3D-printability
guarantee.
Third, \emph{random convex polyhedra}: applying RS and RZ to convex
hulls of random points on $S^2$ (in 3D) or $S^3$ (in 4D) would test
how the Perfect/Possible/Impossible distribution scales with vertex
count and which combinatorial invariants (Euler characteristic of
the dual graph, average face valence, sphericity of the centroid
distribution) are predictive of unfoldability.
Fourth, a \emph{rigorous structural lemma for the 600-cell}: the
worked example in Section~\ref{sec:discussion} suggests that the 4-regular
cell-adjacency graph forces a dead-end once the outer shell of cells
is consumed; a group-theoretic argument exploiting the icosahedral
symmetry of the cell graph would upgrade the empirical impossibility
to a formal theorem.

\section*{Acknowledgements}

One of the authors (T.Y.) thanks Prof.\ Keimei Kaino, emeritus professor
of the National Institute of Technology Sendai College,
for comments on the results during the preparation of the manuscript.
This work was supported in part by the Inoue Grant of Toyo University.
This research also received funding through the Toyo University
Short-term International Visiting Professor Program in 2022,
and partial support from Chiang Mai University.

\paragraph{ORCID.}
\noindent
Takashi Yoshino: \href{https://orcid.org/0000-0003-1756-0162}{0000-0003-1756-0162};
Supanut Chaidee: \href{https://orcid.org/0000-0002-2314-1397}{0000-0002-2314-1397}.

\paragraph{Declaration of competing interest.}
The authors declare that they have no known competing financial
interests or personal relationships that could have appeared to
influence the work reported in this paper.

\section*{Code and data availability}

The Mathematica implementation of the algorithm (both 3D and 4D),
the coordinate data of the six regular 4-polytopes, and the
STL files of all valid 3D realizations of 4-polytope unfoldings
(280 nets covering the 5-, 8-, 16-, and 24-cells) are publicly
available at
\url{https://github.com/takashi-randomwalker/apple-peel-4d}.
The numerical results reported in this paper can be reproduced
by running the driver scripts included in the repository
(\texttt{run4DGlobal\_update.m} and \texttt{run4DGlobal\_120cell.m}
for the 4-polytopes, \texttt{run\_platonic\_updated.m} and
\texttt{run\_archimedean\_faceup.m} for the 3D polyhedra).


\end{document}